\documentclass[sigconf]{acmart}

\usepackage{booktabs}
\usepackage{multirow}
\usepackage[table,xcdraw]{xcolor} 
\usepackage{tabularx}
\usepackage{algorithm}
\usepackage{algorithmicx}
\usepackage{algpseudocode}
\usepackage{enumerate}

\usepackage{amsmath}
\usepackage{subfigure}

\newtheorem{remark}{Remark}

\AtBeginDocument{%
  }

\copyrightyear{2026}
\acmYear{2026}
\setcopyright{cc}
\setcctype{by}
\acmConference[CIKM '26]{Proceedings of the 35th ACM International Conference on Information and Knowledge Management}{November 07--11, 2026}{Rome, Italy}
\acmBooktitle{Proceedings of the 35th ACM International Conference on Information and Knowledge Management (CIKM '26), November 07--11, 2026, Rome, Italy}
\acmDOI{10.1145/3799682.3840938}
\acmISBN{979-8-4007-2539-5/2026/11}

\begin{document}

%%
%% The "title" command has an optional parameter,
%% allowing the author to define a "short title" to be used in page headers.

% \section{Title Information} The title of your work should use capital letters appropriately - \url{https://capitalizemytitle.com/} has useful rules for capitalization. Use the {\verb|title|} command to define the title of your work. If your work has a subtitle, define it with the {\verb|subtitle|} command. Do not insert line breaks in your title. If your title is lengthy, you must define a short version to be used in the page headers, to prevent overlapping text. The \verb|title| command has a `short title'' parameter: \begin{verbatim} \title[short title]{full title} \end{verbatim}

\title{RecSoup: Efficient Model Merging Approach \\for Controllable Multi-Objective Recommendation}
\renewcommand{\shorttitle}{RecSoup: Efficient Model Merging Approach for Controllable Multi-Objective Recommendation}
%%
%% The "author" command and its associated commands are used to define
%% the authors and their affiliations.
%% Of note is the shared affiliation of the first two authors, and the
%% "authornote" and "authornotemark" commands
%% used to denote shared contribution to the research.

\author{Zidong Jia}
\affiliation{%
  \department{Gaoling School of Artificial Intelligence}
  \institution{Renmin University of China}
  \city{Beijing}
  \country{China}
}
\email{2025104099@ruc.edu.cn}

\author{Guofu Xie}
\affiliation{%
  \department{Gaoling School of Artificial Intelligence}
  \institution{Renmin University of China}
  \city{Beijing}
  \country{China}
}
\email{2024001045@ruc.edu.cn}

\author{Xiao Zhang}
%\correspondingauthor
\authornote{Xiao Zhang is the corresponding author (e-mail: zhangx89@ruc.edu.cn). }
\affiliation{%
  \department{Gaoling School of Artificial Intelligence}
  \institution{Renmin University of China}
  \city{Beijing}
  \country{China}
}
\email{zhangx89@ruc.edu.cn}

%%
%% By default, the full list of authors will be used in the page
%% headers. Often, this list is too long, and will overlap
%% other information printed in the page headers. This command allows
%% the author to define a more concise list
%% of authors' names for this purpose.
\renewcommand{\shortauthors}{Zidong Jia, Guofu Xie, and Xiao Zhang}
%%
%% The abstract is a short summary of the work to be presented in the
%% article.
\begin{abstract}

Recommender systems often need to balance multiple conflicting objectives. Controllable Multi-Objective Recommendation (CMOR) enables dynamic adjustment of objective trade-offs at inference time without retraining, adapting to shifting user or system requirements. 
Existing CMOR approaches typically rely on output-level reranking or parameter-generating modules, which either limit the scope of feasible trade-offs or incur substantial training and inference cost. 
Model merging provides a promising alternative by reusing multiple objective-specific models via parameter combination. However, its direct application to recommender systems is challenging due to the lack of a shared linear regime and cross-objective interference. 
To address these issues, we propose RecSoup, a lightweight two-stage approach for effective CMOR via specialized model merging. RecSoup first mitigates parameter interference across objectives through a pruning strategy that jointly considers objective-specific importance and cross-objective relevance. It then performs adaptive parameter merging via a learnable preference-to-coefficient mapping, allowing dynamic trade-off control at inference time without joint retraining or additional generative modules. 
Theoretically, we prove that RecSoup outperforms the parameter generation paradigm in effectiveness and efficiency, as it can fully navigate the Pareto front without smoothness constraints. Extensive experiments on multiple benchmarks demonstrate that RecSoup achieves improved Pareto front quality and effective controllability while delivers 165--228× faster inference-time latency than parameter generation approaches. 
Code is available at \url{https://github.com/35798642/RECSOUP.git}.

\end{abstract}

%%
%% The code below is generated by the tool at http://dl.acm.org/ccs.cfm.
%% Please copy and paste the code instead of the example below.
%%
\begin{CCSXML}
<ccs2012>
   <concept>
       <concept_id>10002951.10003317.10003347.10003350</concept_id>
       <concept_desc>Information systems~Recommender systems</concept_desc>
       <concept_significance>500</concept_significance>
       </concept>

 </ccs2012>
\end{CCSXML}

\ccsdesc[500]{Information systems~Recommender systems}

%%
%% Keywords. The author(s) should pick words that accurately describe
%% the work being presented. Separate the keywords with commas.
\keywords{Model Merging, Controllable Recommendation}

% \received{20 February 2007}
% \received[revised]{12 March 2009}
% \received[accepted]{5 June 2009}

%%
%% This command processes the author and affiliation and title
%% information and builds the first part of the formatted document.
\maketitle

\begin{figure}[t]
    \centering
    \includegraphics[width=0.94\linewidth]{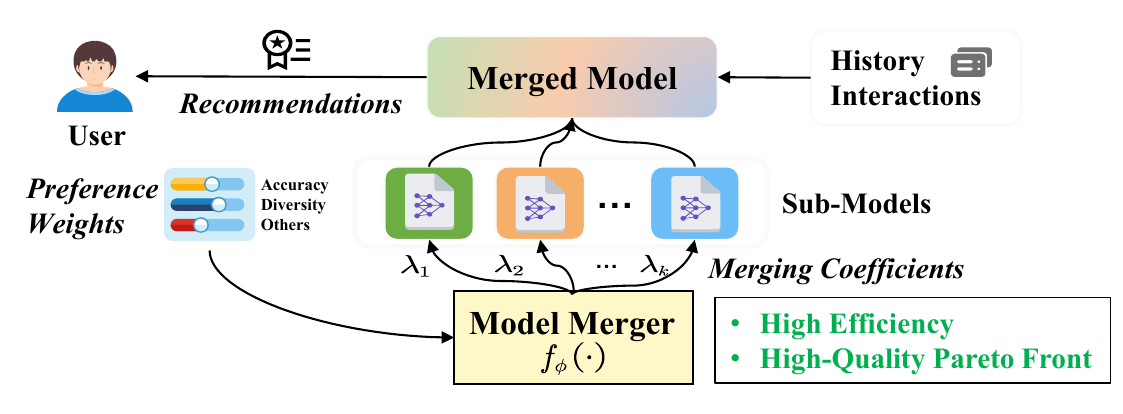}
    \caption{
    \textbf{Illustration of controllable multi-objective recommendation via model merging.}}
    \label{fig:model-merging}
\end{figure}

\section{Introduction}
Modern recommender systems optimize multiple, often conflicting objectives beyond a single metric~\cite{jannach2022multi,zheng2021mors,zaizi2024e-mors,huang2024datamors,ma2023acc-div-moo,ribeiro2012paretohybrid}. While existing multi-objective recommender systems address these trade-offs during training, they typically enforce fixed preferences at inference time~\cite{jannach2023survey,navon2020phn}. In practice, however, the preference weights on objectives are often dynamic, shifting across users' or systems' requirements. This motivates Controllable Multi-Objective Recommendation (CMOR), which enables dynamic adjustment of objective trade-offs at inference time without retraining~\cite{shen2026controllable}.

Research on CMOR can be categorized into two paradigms based on the stage of intervention: post-processing and in-processing approaches. Post-processing approaches, such as rule-based reranking like MMR~\cite{carbonell1998use} intervene by adjusting the initially generated recommendation lists. While being model-agnostic, their performance is constrained by the fixed candidate set, often leading to a narrow scope of trade-off solutions. In-processing focuses on parameter-level modulation, primarily through parameter generation. These parameter generation approaches learn the mapping from user preferences over multiple objectives to the parameter space of recommendation model by employing generative architectures, such as hypernetworks~\cite{navon2020phn,chen2023controllable,hoang2023phnhvi} or diffusion models~\cite{shen2025paragon}. Despite delivering superior performance, these in-processing approaches inflate the training cost with a massive number of additional parameters from the generative modules and incur non-negligible inference latency, while often suffering from optimization instability. 

%引出：已有推荐模型可能不同版本，复用已有的推荐模型，通过模型融合复用；
%Existing CMOR approaches achieve controllability either by modifying outputs or generating model parameters tailored to diverse user preferences, leaving a critical complementary direction largely unexplored: direct manipulation of existing recommendation model parameters.  
%Recent advances in large language models demonstrate that model merging can adjust model behavior through weighted merging of existing model parameters according to multi-objective preferences, without the need for retraining or additional parameter-generating modules~\cite{wortsmanModelSoups2022,yang2024model,ortiz2023task,rameRewardedSoups2023,ilharco2022editing}. 
%Notably, industrial recommendation pipelines often accumulate numerous costly pre-trained, objective-specialized sub-models (e.g., tailored for accuracy or diversity), which are either discarded or require costly joint retraining when adapting to dynamic preferences. 
% 直接引出 
This raises a critical open question: why not reuse existing recommendation models to efficiently achieve controllable trade-offs, rather than constructing new parameters from scratch?
Recent advances in large language models offer a promising answer: \emph{model merging}, which adjusts model behavior through weighted merging of existing model parameters according to multi-objective preferences, eliminating the need for retraining or additional parameter-generating modules~\cite{wortsmanModelSoups2022,yang2024model,ortiz2023task,rameRewardedSoups2023,ilharco2022editing}. 
This paradigm is particularly well-suited for recommender systems, where multiple independently trained, objective-specialized sub-models (e.g., for accuracy or diversity) commonly coexist in industrial training pipelines~\cite{eisenman2022ckpts,cai2022reloop}. Rather than discarding these costly pre-trained checkpoints, model merging offers an efficient mechanism for reusing and combining their expertise. 
As illustrated in Figure~\ref{fig:model-merging}, in this paper, we realize controllable recommendation by dynamically merging objective-specialized sub-models in the parameter space. Given user-specified preference weights across objectives, a lightweight model merger $f_{\phi}(\cdot)$ maps these preferences to merging coefficients that govern the integration of parameters from each sub-model. The resulting merged model directly generates recommendations based on users' historical interactions. Notably, this model merging paradigm achieves both high efficiency and a high-quality Pareto front, while avoiding the costs associated with joint multi-objective retraining. 

However, directly applying existing model merging approaches to recommender systems is non-trivial. Unlike settings where task-specific updates reside in a shared linear regime and can be composed through task arithmetic~\cite{ilharco2022editing,ortiz2023task}, objective-specialized recommendation models exhibit heterogeneous parameter updates driven by structurally conflicting objectives such as accuracy and diversity. As analyzed in Section~\ref{problem_analyses}, recommendation task vectors substantially deviate from the linear regime hypothesis that underpins most task-vector merging approaches. The absence of a shared low-dimensional linear subspace renders naive linear interpolation ineffective and prone to cross-objective interference, motivating the need for a merging strategy tailored to the complex parameter geometry of recommendation models.

% proposed approach
To address these challenges, we propose RecSoup, a two-stage framework for controllable multi-objective recommendation. In the first stage, RecSoup employs a fine-grained pruning strategy to resolve parameter interference. Unlike magnitude-based approaches, we jointly consider objective-specific importance and cross-objective relevance, which establishes a harmonious parameter foundation for merging. In the second stage, RecSoup moves beyond linear interpolation. We integrate a learnable mechanism that captures the mapping from user preference vectors to merging coefficients, allowing the model to dynamically navigate the parameter space. The synergy between pruning-driven conflict mitigation and adaptive merging enables RecSoup to generate high-quality Pareto solutions with superior controllability, bypassing the optimization instability and architectural overhead of generative approaches. 
% contribution
The key contributions are summarized as follows: 
\begin{itemize} 
    \item We present \textbf{RecSoup}, the first framework to integrate model merging into CMOR, adjusting objective trade-offs directly in the parameter space.
    \item We propose a pruning strategy that jointly considers objective-specific and cross-objective importance and can be integrated with existing dynamic merging mechanisms to substantially improve controllable multi-objective recommendation.
    % \item We propose a pruning strategy that jointly considers objective-specific and cross-objective importance, and demonstrate that this scheme can be integrated with existing dynamic merging mechanisms to substantially improve controllable
    % multi-objective recommendation.
    \item We provide theoretical analysis showing that model merging is a more effective and efficient paradigm for CMOR.
    \item Extensive experiments conducted on various backbones and datasets validate that RecSoup achieves higher-quality Pareto fronts while maintaining comparable controllability to state-of-the-art baselines.
\end{itemize}

\section{Related Work}
\emph{Model Merging} integrates task-specific specialists into a unified model without joint retraining, and has been widely studied in LLMs~\cite{yang2024model,matena2022fishermerging,ilharco2022editing,wortsmanModelSoups2022}. Existing methods mainly address interference among task-specific updates, or task vectors, through conflict mitigation~\cite{huang2024emr,lu2024twinmerging,yadav2023ties,yu2024dare,du2024parameter,yang2024surgerymerging,gargiulo2025tsv,marczak2025no,sun2025tatr} or coefficient determination~\cite{yang2023adamerging,du2024parameter}. The former reduces conflicts by pruning redundant or conflicting parameters, such as TIES-merging~\cite{yadav2023ties}, DARE~\cite{yu2024dare}, CABS~\cite{yang2025cabs}, and LwPTV~\cite{zhang2025lwptv}, or by projecting task vectors into orthogonal subspaces, such as TATR~\cite{sun2025tatr}. These methods motivate our pruning design, but their criteria are not tailored to recommendation objectives and overlook objective-specific contribution and cross-objective relevance. 
%Since naive merging often suffers from interference among task-specific updates, hereafter referred to as task vectors, existing methods mainly fall into two categories: conflict mitigation and coefficient determination. Conflict-mitigation methods aim to reduce destructive interference among task vectors. Pruning-based methods, including TIES-merging~\cite{yadav2023ties}, DARE~\cite{yu2024dare}, CABS~\cite{yang2025cabs}, and LwPTV~\cite{zhang2025lwptv}, remove redundant or conflicting parameters based on magnitude, sign consistency, overlap, or layer-wise saliency. TATR~\cite{sun2025tatr} instead mitigates conflicts by projecting task vectors into orthogonal subspaces. These methods motivate our use of pruning, but their criteria are primarily developed in general model-merging settings and do not explicitly model objective-specific contribution or cross-objective relevance in recommendation. 
The latter determines how to combine task vectors. AdaMerging~\cite{yang2023adamerging} adaptively learns merging weights, while PCB-merging~\cite{du2024parameter} considers both intra-task and inter-task balancing, which inspires our pruning design. However, these methods typically yield a single fixed merged model, resulting in a ``one-size-fits-all'' solution~\cite{akiba2025evomerging} that is unsuitable for CMOR, where different preference vectors require different trade-offs among accuracy, diversity, and fairness. Recent Pareto merging methods learn multiple solutions via low-rank structures~\cite{chen2024paretomerging}, surrogate evaluation~\cite{li2025map}, or MoE-based coefficient generation~\cite{tang2024moe}. Following Tang et al.~\cite{tang2024moe}, we adopt a preference-conditioned coefficient generator as determiner and tailor it with our pruning criterion to CMOR.

\emph{Parameter Generation} enables controllability by dynamically generating model weights to meet specific task demands. This approach was pioneered in controllable multi-task learning~\cite{navon2020phn,raychaudhuri2022controllable}, which demonstrated that hypernetworks could map preference vectors to task-specific parameters. Subsequent advances such as PHN~\cite{navon2020phn} and PHN-HVI~\cite{hoang2023phnhvi} have shown that a single hypernetwork can effectively approximate the entire Pareto front. In recommendation, CMR~\cite{chen2023controllable} adapts hypernetwork by generating sensitive re‑ranking parameters for instant preference adjustment, while Hyperbandit and its variant further extend this idea to time‑varying user preferences in streaming recommendation~\cite{shen2023hyperbandit,shen2026enhancing}. Beyond hypernetworks, Paragon~\cite{shen2025paragon} employs diffusion models to synthesize objective-conditioned adapters, matching the performance of retraining at lower cost. Despite their effectiveness in achieving Pareto optimality, parameter generation paradigms face significant challenges regarding architectural complexity and training stability.

\section{Problem Formulation and Analyses}
This section formulates the problem of controllable multi-objective recommendation via model merging and analyzes the limitations of naive task-vector merging in sequential recommendation models.
\begin{figure}[t]
    \centering
\includegraphics[width=0.92\linewidth]{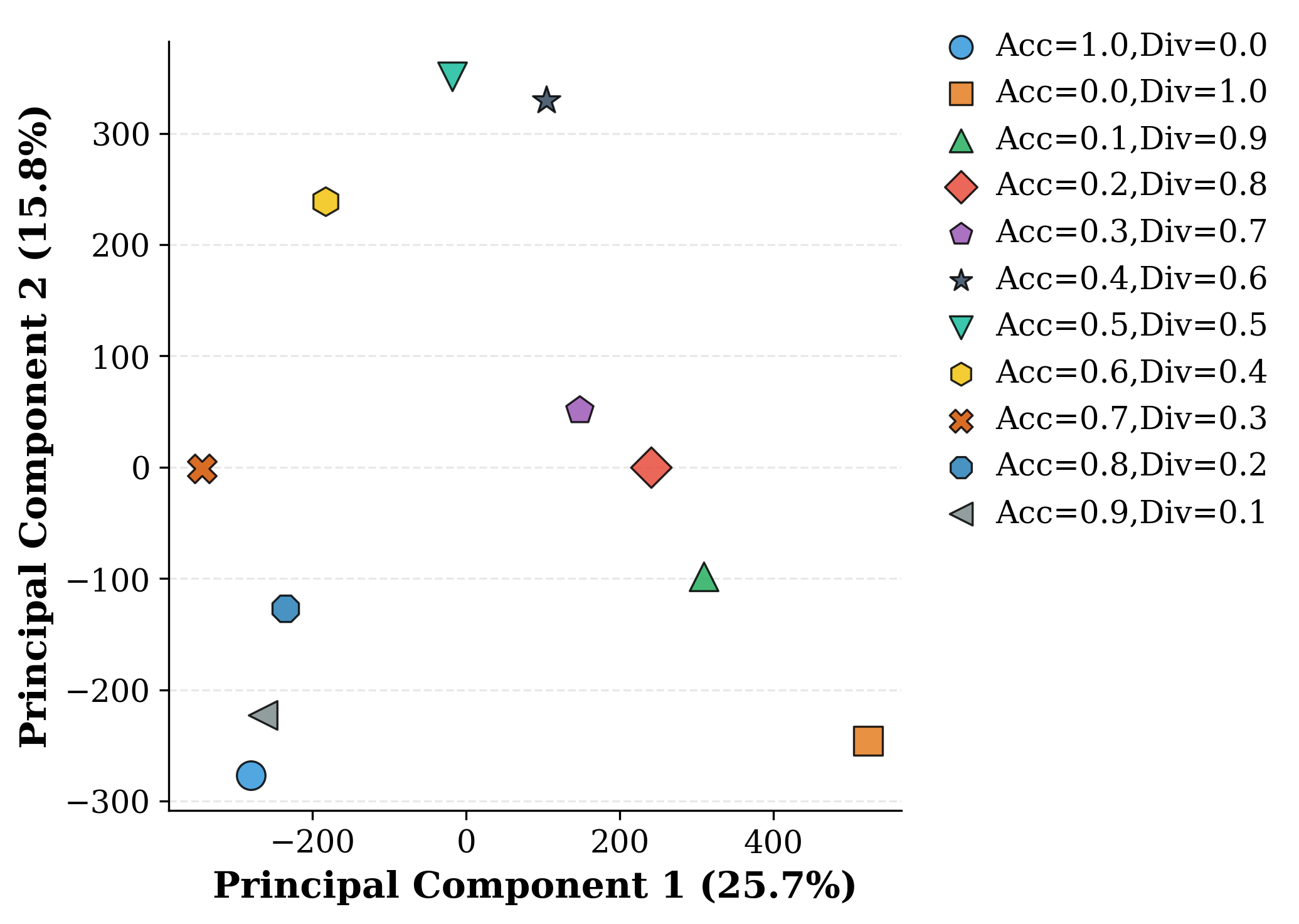} 
   \caption{PCA projection of accuracy- and diversity-optimized task vectors in TiSASRec. The axes correspond to the first (PC1) and second (PC2) principal components, explaining 25.7\% and 15.8\% of the variance, respectively.}
    \label{fig:task_vector_pca}
\end{figure}
\subsection{Problem Formulation}
Let $\mathcal{U}$ and $\mathcal{I}$ denote the sets of users and items, respectively. We consider a multi-objective sequential recommendation scenario with $N$ distinct list-wise evaluation dimensions (e.g., accuracy, diversity). Let $\theta_{\mathrm{base}}$ represent the parameters of a base recommendation model. For each objective $i \in \{1, \dots, N\}$, a specialized expert model $\theta_i$ is obtained, and its corresponding task vector~\cite{ortiz2023task} is defined as 
\begin{equation}
\boldsymbol{\tau}_i = \operatorname{vec}(\theta_i) - \operatorname{vec}(\theta_{\mathrm{base}}),
\label{eq:task_vector}
\end{equation}
where $\operatorname{vec}(\cdot)$ denotes flattening all model parameters into a single row vector. Accordingly, $\boldsymbol{\tau}_i \in \mathbb{R}^{1 \times D}$ captures the objective-specific parameter updates in a linearized parameter space. Throughout this section, all model parameters are discussed in their vectorized form unless otherwise specified.

The goal of controllable multi-objective recommendation is to dynamically adjust model parameters according to a real-time user preference vector $\boldsymbol{\mu} = [\mu_1, \mu_2, \dots, \mu_N]^\top$, which resides in the $(N-1)$-dimensional unit simplex
$\Delta_{N-1} = \{ \boldsymbol{\mu} \in \mathbb{R}_+^N : \sum_{i=1}^N \mu_i = 1 \}$. In the framework of task-vector merging, we aim to learn a mapping function $\mathcal{F}: \boldsymbol{\mu} \mapsto \boldsymbol{\lambda},
$ which transforms the preference vector into a set of merging coefficients $\boldsymbol{\lambda} = [\lambda_1, \dots, \lambda_N]^\top$. Notably, $\boldsymbol{\lambda}$ is not constrained to be identical to $\boldsymbol{\mu}$. The merged model parameters are constructed as
\begin{equation}
    \bar{\theta}(\boldsymbol{\mu}) = \theta_{\mathrm{base}} + \sum_{i=1}^N \lambda_i \boldsymbol{\tau}_i,
    \label{eq:task_merging}
\end{equation}
where $\lambda_i = [\mathcal{F}(\boldsymbol{\mu})]_i$ denotes the merging coefficient associated with the $i$-th task vector.

For a given user $u \in \mathcal{U}$, the recommendation model parameterized by $\bar{\theta}(\boldsymbol{\mu})$ in Eq.~\eqref{eq:task_merging} produces a ranked list of $L$ items, denoted as $\mathcal{L}_u(\bar{\theta}(\boldsymbol{\mu}))$. The performance of this list is evaluated by a set of objective-specific metrics $\{R_i(\mathcal{L}_u)\}_{i=1}^N$. We adopt a linear scalarization to aggregate multiple objectives: 
\begin{equation}
    S(\mathcal{L}_u, \boldsymbol{\mu}) = \sum_{i=1}^N \mu_i \, R_i(\mathcal{L}_u(\bar{\theta}(\boldsymbol{\mu}))),
    \label{eq:weighted_reward}
\end{equation}
which explicitly reflects how the preference vector $\boldsymbol{\mu}$ governs the trade-off among different objectives through the merged parameters. Our problem is to optimize the mapping function $\mathcal{F}$ such that the induced parameter family
$\{ \bar{\theta}(\boldsymbol{\mu}) \mid \boldsymbol{\mu} \in \Delta_{N-1} \}$ approximates the Pareto front of the multi-objective recommendation problem in expectation over users, while ensuring that each metric $R_i$ in Eq.~\eqref{eq:weighted_reward} exhibits consistent and monotonic responsiveness to its corresponding preference weight $\mu_i$.

\subsection{Problem Analyses}
\label{problem_analyses}

\begin{figure}[t]
  \centering
    \includegraphics[width=0.99\linewidth]{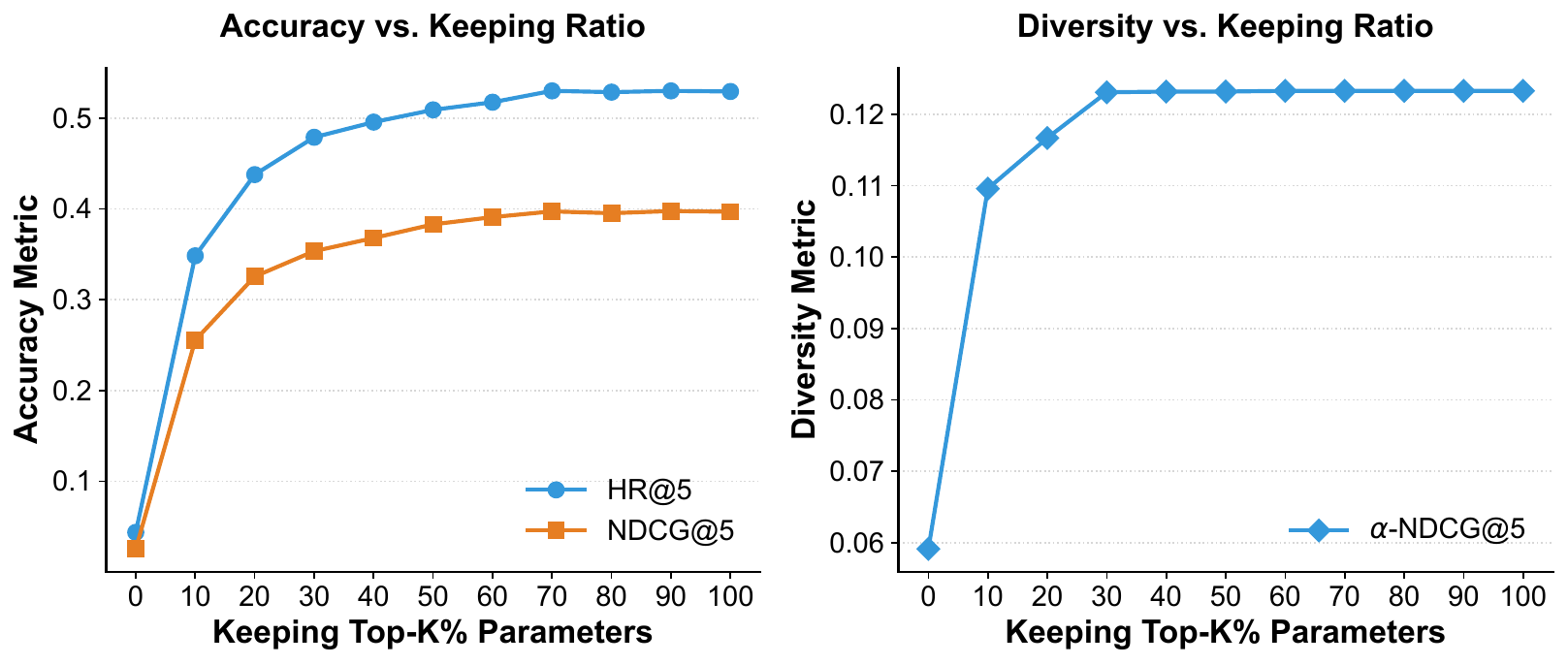}
  \caption{An illustration of the redundant parameters in multi-objective recommendation.}
  \label{fig:redundant_analysis}
\end{figure}

\begin{figure*}[t]
    \centering
    \includegraphics[width=\textwidth]{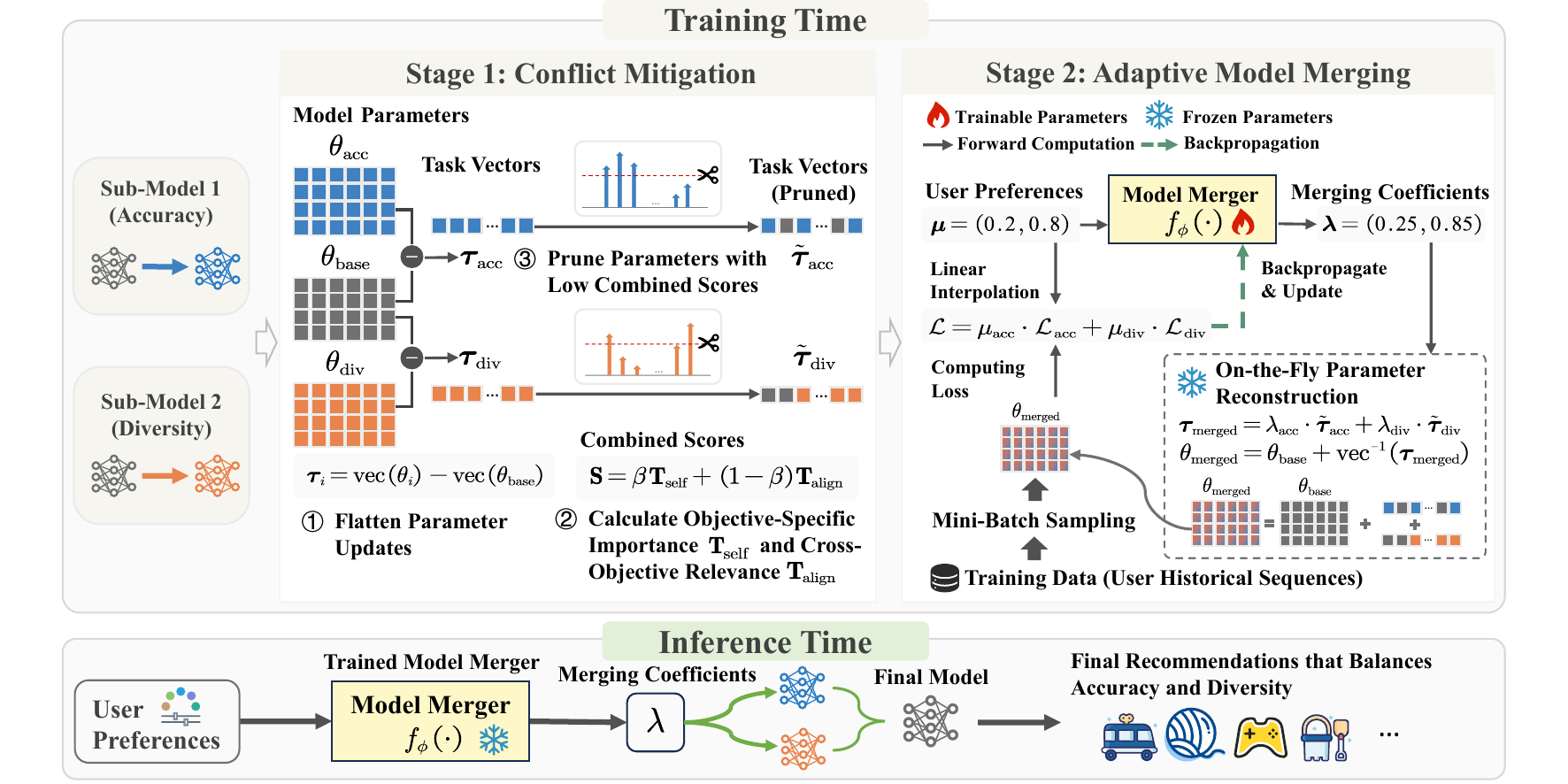}
    \caption{\textbf{The overall architecture of RecSoup.}}
    \label{fig:overall_framework}
\end{figure*}

%In this section, We analyze why task-vector merging approaches, which have shown promising results in large language models, cannot be directly transferred to multi-objective sequential recommendation.

% 为什么naive merging不能用于推荐模型？逻辑链：task arithmetic能够在在llm上work的假设是linear regime，即微调发生在较小的可近似线性的空间 --> 推荐模型，不同目标对应的微调（参数更新）变化较大，不存在这样的linear regime，违反了task arithmetic能够work的假设（PCA实验作为辅助验证，其结果表明不存在共享的线性子空间）--> 所以task arithmetic在推荐模型上不能直接用
% 为什么PCA实验结果能够说明推荐模型违反了linear regime? 背书：不同任务的$(\tau_t)$本质是该子空间内的不同向量组合，反映在PCA上就是：● 所有task vector会聚集在PCA的前1-2个主成分方向（核心线性子空间）；● PC1解释方差极高 即大部分参数变化可被线性子空间解释）
\textbf{Linear Regime Hypothesis.} Task arithmetic (e.g., linear merging of task vectors) has proven effective in models such as CLIP, where its success is grounded in the linear regime hypothesis~\cite{wortsman2022robust,wortsmanModelSoups2022,ilharco2022editing}. This hypothesis assumes that fine-tuning operates within a linear regime, where task-specific parameter updates (task vectors as defined in Eq.~\eqref{eq:task_vector}) are confined to a shared, composable linear subspace consistent with the neural tangent kernel (NTK) approximation. This implies that all task-specific parameter updates are restricted within the linear subspace defined by the NTK tangent space. Different tasks' \(\tau_i\) can therefore be interpreted as distinct vector combinations within this shared subspace.

\textbf{Empirical Violation of the Linear Regime.} To assess whether this assumption extends to sequential recommendation models, we employ Principal Component Analysis (PCA) to characterize the geometric structure of task vectors. As formalized in Ortiz et al.~\cite{ortiz2023task} (Property 2), the linear regime inherently requires that task-specific parameter updates reside within a shared linear subspace (the NTK tangent space). In this scenario, the model's output change after fine-tuning can be approximated by a first-order Taylor expansion (\(f(x;\theta^*) - f(x;\theta_0) \approx \tau^\top \nabla_\theta f(x;\theta_0)\)), which manifests in PCA as a distinctive structural signature: task vectors should exhibit tight clustering along a small number of principal components (PCs), with the majority of parameter variation concentrated in these dominant dimensions. 
%Under the linear regime hypothesis, task-specific parameter updates are expected to concentrate within a shared low-dimensional linear subspace, a property that should manifest as strong variance concentration along a small number of principal components.

However, our PCA results (Figure~\ref{fig:task_vector_pca}) reveals that sequential recommendation models lack this signature of a shared linear subspace. Task vectors optimized for accuracy and diversity are dispersed across disjoint regions of the PCA space, with no evidence of clustering along dominant components: PC1 explains only 25.66\%–26.23\% of the total variance, and the top two components collectively capture less than 45\% of the variance. This fragmented parameter variation directly contradicts the linear regime's requirement: the absence of a dominant principal component indicates no core linear subspace governs cross-objective updates; the unstructured dispersion of task vectors demonstrates that accuracy and diversity rely on fundamentally heterogeneous parameter directions, which cannot be mutually explained through linear combinations; and the low cumulative variance of top components confirms that parameter changes are distributed across high-dimensional space, rather than being confined to a compact linear subspace. 

%Together, these results indicate that the linear regime assumption underlying task arithmetic does not hold in sequential recommendation, thereby limiting the effectiveness of linear task vector merging for controllable multi-objective recommendation.

\textbf{Motivation.} The PCA results (Figure~\ref{fig:task_vector_pca}) show that task vectors for different objectives are dispersed across a high-dimensional space rather than concentrated in a shared linear subspace, indicating a complex task vector manifold that cannot be effectively captured by linear combinations. This motivates the need for \textbf{adaptive merging} to learn preference-conditional coefficients that navigate this manifold. At the same time, sequential recommendation models exhibit heterogeneous parameter importance: although approximately 30\% of parameters can be removed without severe degradation (Figure~\ref{fig:redundant_analysis}), pruning affects objectives differently. Accuracy-oriented updates are highly sensitive, while diversity-oriented ones are robust. This asymmetry suggests that naively merging dense vectors risks suppressing critical parameters and retaining interfering updates, making preference-based merging unstable and less controllable. Together, these observations motivate a two-stage design: first, \textbf{conflict mitigation via pruning} reduces interference and preserves task-critical parameters, and second, \textbf{adaptive merging} explores the complex manifold to achieve controllable multi-objective trade-offs.

\section{RecSoup: The Proposed Approach}
In this section, we present RecSoup, a novel controllable multi-objective recommendation framework that decouples the merging process into two stages: determining which parameters to combine and defining how to weight them. This is realized through two modules--conflict mitigation via pruning and adaptive merging with learnable coefficients.

\subsection{Approach Overview}
% Figure~\ref{fig:overall_framework} illustrates the overall architecture of RecSoup. RecSoup consists of two stages: (1) \textbf{Conflict Mitigation with Pruning.} It transforms dense task vectors into sparse ones by filtering parameters based on their objective-specific importance and cross-objective relevance, thereby reducing the parameter interference inherent in multi-objective recommendation; (2) Adaptive Model Merging with Learnable Coefficients serves to dynamically search for merging coefficients within a learned Pareto manifold. The primary goal of this stage is to find the optimal combination of sub-models that achieves a more balanced and superior trade-off across competing recommendation metrics, such as accuracy and diversity. By first purifying the parameter space and then navigating the manifold for coefficient determination, RecSoup provides a controllable mechanism to balance multiple objectives in recommender systems. At inference time, RecSoup requires one forward pass through the gating network
% to obtain merging coefficients for a given preference vector, enabling efficient and fine-grained control over multi-objective recommendation without retraining. Next, we specify the architecture of RecSoup with details.

Figure~\ref{fig:overall_framework} illustrates RecSoup's two-stage architecture: (1) \textbf{Conflict Mitigation via Pruning.} This stage sparsifies task vectors by filtering parameters based on importance and relevance, thereby mitigating interference. (2) \textbf{Adaptive Model Merging.} This stage dynamically optimizes coefficients within a learned Pareto manifold to balance competing metrics (e.g., accuracy and diversity). By distinguishing parameter selection from coefficient determination, RecSoup ensures robust control. At inference, a gating network generates coefficients for any preference vector in a single forward pass, enabling efficient control without retraining. We detail the architecture below.

\subsection{Conflict Mitigation via Pruning}

The first stage of the RecSoup framework is the \textbf{Sub-model Pruning} step, which aims to generate a set of sparse and specialized task vectors by pruning an initial task vector matrix.
Formally, let $\mathbf{T} \in \mathbb{R}^{N \times D}$ denote the task vector matrix, where the $i$-th row $\mathbf{T}_{i,:}$ corresponds to the task vector associated with the $i$-th objective.
The output of this stage is a pruned task vector matrix $\tilde{\mathbf{T}} \in \mathbb{R}^{N \times D}$.
The pruning procedure is designed to identify and preserve parameters that are highly important to a specific objective while suppressing task-irrelevant updates that cause cross-objective interference. The details are shown in Algorithm~\ref{alg:recsoup_pruning}.

\paragraph{Feature Importance Scoring}
Pruning is conducted independently for each task vector, corresponding to the $i$-th row $\mathbf{T}_{i,:} \in \mathbb{R}^{1 \times D}$ of the task vector matrix.
For each parameter position $j \in \{1, \dots, D\}$, we compute a \textbf{Combined Importance Score} $S_{i,j}$ by integrating two complementary components: objective-specific importance and cross-objective relevance.

\begin{itemize}
    \item \textbf{Objective-Specific Importance.}
    This component measures the relative contribution of parameter $j$ to the overall magnitude of task vector $i$. It is defined as the squared magnitude of the parameter normalized by the $\ell_2$ norm of the task vector:
    \begin{equation}
        T_{\mathrm{self},\,i,j}
        =
        \frac{T_{i,j}^2}{\|\mathbf{T}_{i,:}\|_2^2}.
        \label{eq:self_importance}
    \end{equation}

    \item \textbf{Cross-Objective Relevance.}
    To quantify the degree of coupling between parameter $j$ of task $i$ and other objectives, we compute its interaction with the average update magnitude of the remaining tasks at the same parameter position.
    Specifically, we define $\bar{T}_{i,j} = \frac{1}{N-1} \sum_{k \neq i} T_{k,j}$ and measure the cross-objective relevance as
    \begin{equation}
        T_{\mathrm{align},\,i,j}
        =
        \left|
        T_{i,j} \cdot \bar{T}_{i,j}
        \right|.
        \label{eq:align_score}
    \end{equation}
\end{itemize}

\paragraph{Combined Importance Score}
The final importance score for each parameter position is defined as a weighted combination of the two components:
\begin{equation}
    S_{i,j}
    =
    \beta T_{\mathrm{self},\,i,j}
    +
    (1-\beta)T_{\mathrm{align},\,i,j},
    \label{eq:final_score}
\end{equation}
where $\beta \in [0,1]$ controls the trade-off between objective specialization and cross-objective sensitivity.
We denote $\mathbf{S}_{i,:} = \{ S_{i,j} \}_{j=1}^{D}$ as the combined importance score vector for task $i$.

\paragraph{Top-$k$ Selection via Norm Retention}
After obtaining the combined scores $\mathbf{S}_{i}$, we rank the parameter positions to prioritize those with higher importance.
Let $\pi$ be the permutation of indices $\{1, \dots, D\}$ such that the scores are sorted in descending order: $S_{i, \pi(1)} \ge S_{i, \pi(2)} \ge \dots \ge S_{i, \pi(D)}$.
Rather than enforcing a fixed parameter count, we adopt a dynamic norm retention strategy.
We determine the minimal number of top-ranked parameters, denoted as $k_i$, required to preserve a predefined fraction $\rho$ of the task's total information (measured by self-importance):
\begin{equation*}
    k_i = \min \left\{ k \;\bigg|\; \frac{\sum_{m=1}^{k} T_{\mathrm{self},\,i,\,\pi(m)}}{\sum_{j=1}^{D} T_{\mathrm{self},\,i,\,j}} \ge \rho \right\}.
    \label{eq:cutoff_k}
\end{equation*}
Accordingly, the selected index set is defined as the top-$k_i$ indices: $\mathcal{K}_i = \{ \pi(1), \dots, \pi(k_i) \}$.
The pruned task vector is then generated via element-wise masking: $ \tilde{\mathbf{T}}_{i}=\mathbf{T}_{i} \odot\mathbb{I}_{\mathcal{K}_i}$, where $\mathbb{I}_{\mathcal{K}_i} \in \{0,1\}^{D}$ is the indicator vector for the set $\mathcal{K}_i$.
This strategy ensures that the pruned vector retains the core semantic information (high norm retention) while effectively filtering out conflicting or negligible parameters (low $S$ score), thereby producing a refined subspace for stable merging.

\textbf{Rationale for Absolute Value in Cross-Objective Relevance.}
% We intentionally employ the absolute value in Eq.~\eqref{eq:align_score} to relax the directional consistency assumptions commonly adopted in task arithmetic approaches.
% In multi-objective recommendation, parameters that are highly relevant to multiple objectives may naturally exhibit opposite update directions—for example, improving accuracy may inherently reduce diversity. 
% By taking the absolute interaction magnitude, we measure cross-objective relevance in terms of coupling strength rather than directional alignment.
% This design enables RecSoup to preserve parameters that encode critical yet competing objective-specific information, instead of erroneously pruning them due to sign conflicts.
We intentionally employ the absolute value in Eq.~\eqref{eq:align_score} to relax the directional consistency assumptions typically imposed by task arithmetic approaches. In multi-objective recommendation, parameters critical to multiple tasks often exhibit conflicting update directions. For instance, improving accuracy may inherently reduce diversity. By focusing on the magnitude of the interaction, we measure cross-objective relevance based on coupling strength rather than directional alignment. This design ensures that RecSoup preserves parameters encoding essential yet competing information, preventing their erroneous pruning due solely to sign conflicts.

% \paragraph{Top-$k$ Selection via Norm Retention}
% After obtaining the combined scores $\mathbf{S}_{i,:}$, we sort parameter positions in descending order of $S_{i,j}$.
% Rather than enforcing a fixed sparsity ratio, we determine the cutoff index $k_i$ by ensuring that the retained parameters preserve a predefined fraction $\rho$ of the original task vector norm:

% \begin{equation*}
%     \mathcal{K}_i
%     =
%     \min \left\{
%     \mathcal{K}
%     \;\bigg|\;
%     \sum_{j \in \mathcal{K}}
%     T_{\mathrm{self},\,i,j}
%     \ge \rho
%     \right\}.
% \end{equation*}
% The pruned task vector is then obtained by element-wise masking:
% \begin{equation*}
%     \tilde{\mathbf{T}}_{i,:}
%     =
%     \mathbf{T}_{i,:}
%     \odot
%     \mathbb{I}_{\mathcal{K}_i},
% \end{equation*}
% where $\mathbb{I}_{\mathcal{K}_i} \in \{0,1\}^{1 \times D}$ is the indicator vector corresponding to the selected index set $\mathcal{K}_i$.
% This norm-retention strategy ensures that each pruned task vector preserves its core semantic information while eliminating noisy or conflicting parameter updates, thereby producing a purified subspace that facilitates stable preference-based merging.

\begin{algorithm}[t]
\caption{Pruning Procedure}
\label{alg:recsoup_pruning}
\begin{algorithmic}[1]
\renewcommand{\algorithmicrequire}{\textbf{Input:}}
\renewcommand{\algorithmicensure}{\textbf{Output:}}
\Require Task vectors $\mathbf{T} \in \mathbb{R}^{N \times D}$, norm retention ratio $\rho$, weight $\beta$
\Ensure Pruned task vectors $\tilde{\mathbf{T}} \in \mathbb{R}^{N \times D}$
\For{each task $i = 1, \dots, N$}
    \Statex $\triangleright$ Compute Objective-Specific Importance: 
    \State $T_{\mathrm{self},\,i,:} \leftarrow \mathbf{T}_{i,:} \odot \mathbf{T}_{i,:} / \|\mathbf{T}_{i,:}\|_2^2$
    \Statex $\triangleright$ Compute Cross-Objective Relevance:
    \State $\bar{\mathbf{T}}_{i,:} \leftarrow \frac{1}{N-1} \sum_{k \neq i} \mathbf{T}_{k,:}$
    \State $T_{\mathrm{align},\,i,:} \leftarrow |\mathbf{T}_{i,:} \odot \bar{\mathbf{T}}_{i,:}|$
    \Statex $\triangleright$ Compute Combined Score: 
    \State $\mathbf{S}_{i,:} \leftarrow \beta T_{\mathrm{self},\,i,:} + (1-\beta) T_{\mathrm{align},\,i,:}$
    \State Select top-$k_i$ indices $\mathcal{K}_i$ s.t. $\sum_{j \in \mathcal{K}_i} T_{\mathrm{self},\,i,j} \ge \rho$
    \State Prune task vector: $\tilde{\mathbf{T}}_{i,:} \leftarrow \mathbf{T}_{i,:} \odot \mathbb{I}_{\mathcal{K}_i}$
\EndFor
\State \Return $\tilde{\mathbf{T}}$
\end{algorithmic}
\end{algorithm}

% \begin{algorithm}[t]
% \caption{Pruning Procedure}
% \label{alg:recsoup_pruning}
% \begin{algorithmic}[1]
% \renewcommand{\algorithmicrequire}{\textbf{Input:}}
% \renewcommand{\algorithmicensure}{\textbf{Output:}}
% \Require Task vectors $\mathbf{T} = \{\mathbf{T}_i\}_{i=1}^{N}$, norm retention ratio $\rho$, weights $\beta_1, \beta_2$
% \Ensure Pruned task vectors $\tilde{\mathbf{T}} = \{\tilde{\mathbf{T}}_i\}_{i=1}^{N}$
% \Statex
% \For{each task $i = 1, \dots, N$}
%     \Statex $\triangleright$ Compute Objective-Specific Importance:
%     \State $\mathbf{T}_{\text{self}}^{(i)} \leftarrow \left\{ \frac{\mathbf{T}_{ij}^2}{\|\mathbf{T}_i\|_2^2} \right\}_{j=1}^{D}$
    
%     \Statex $\triangleright$ Compute Cross-Objective Relevance:
%     \State $\bar{\mathbf{T}} \leftarrow \frac{1}{N-1} \sum_{k \neq i} \mathbf{T}_k$
%     \State $\mathbf{T}_{\text{align}}^{(i)} \leftarrow \left\{ |\mathbf{T}_{ij} \cdot \bar{\mathbf{T}}_j| \right\}_{j=1}^{D}$
    
%     \Statex $\triangleright$ Compute Combined Importance Score:
%     \State $\mathbf{S}_i \leftarrow \beta \mathbf{T}_{\text{self}}^{(i)} + (1-\beta) \mathbf{T}_{\text{align}}^{(i)}$
    
%     \Statex $\triangleright$ Select top-$k_i$ dimensions by $\mathbf{S}_i$ such that:
%     \State $\mathcal{K}_i \leftarrow \min \{ \mathcal{K} \mid \sum_{j \in \mathcal{K}} \mathbf{T}_{\text{self},j}^{(i)} \geq \rho \}$
    
%     \Statex $\triangleright$ Prune task vector:
%     \State $\tilde{\mathbf{T}}_i \leftarrow \mathbf{T}_i \odot \mathbb{I}_{\mathcal{K}_i}$
% \EndFor
% \State \Return $\tilde{\mathbf{T}}$
% \end{algorithmic}
% \end{algorithm}

\subsection{Adaptive Model Merging with Learnable Coefficients}
While our learnable coefficients modules adopts a preference-conditional architecture similar to PWE MoE~\cite{tang2024moe}, RecSoup fundamentally redefines the \textit{basis construction} for the gating network. Existing approaches operate directly on dense, high-interference task vectors, forcing the model to simultaneously handle parameter conflicts and objective trade-offs. In contrast, RecSoup decouples these challenges: it performs conflict-aware pruning \textit{prior} to coefficient learning. This ensures that the gating network navigates a purified parameter subspace, allowing it to focus on learning precise coordinates on the Pareto manifold rather than compensating for noise or interference inherent in dense vectors.

After the pruning stage, RecSoup produces the pruned task vector matrix
$\tilde{\mathbf{T}} \in \mathbb{R}^{N \times D}$, where each row
$\tilde{\mathbf{T}}_{i,:} \in \mathbb{R}^{1 \times D}$ represents the task vector of an
objective-specialized sub-model with reduced parameter interference.
These task vectors form the basis for adaptive model merging in the second stage. Following~\cite{tang2024moe}, we adopt a lightweight
gating network $f_\phi(\cdot)$,  which we refer to as the model merger, that learns a mapping from user preference vectors
$\boldsymbol{\mu} \in \mathbb{R}^N$ to approximately optimal merging coefficients,  \begin{equation}
    \boldsymbol{\lambda} = f_\phi(\boldsymbol{\mu}),
\label{eq:mapping}
\end{equation}
where each dimension of $\mathbf{\mu}$ encodes the relative importance of an optimization objective,
such as accuracy or diversity. 
The output coefficients $\boldsymbol{\lambda}\in \mathbb{R}^N$ determine how the pruned task vectors are combined. The merged task vector is obtained via weighted aggregation: $ \boldsymbol{\tau}_{\mathrm{merged}} = \sum_{i=1}^N \lambda_i \tilde{\mathbf{T}}_{i,:}$, which is added to the base model parameters to recover the full merged model weights,  $\theta_{\mathrm{merged}} = \theta_{\mathrm{base}} + \boldsymbol{\tau}_{\mathrm{merged}}$. During training, only the gating network parameters $\phi$ are updated, while the pruned task vectors
$\tilde{\mathbf{T}}$ and the base model weights $\theta_{\mathrm{base}}$ remain fixed.
To reduce computation, we evaluate the preference-weighted multi-objective loss using a small subset of the training data:
\begin{equation*}
    \mathcal{L}(\theta_{\mathrm{merged}}, \boldsymbol{\mu}) = \sum_{i=1}^{N} \mu_i \cdot \mathcal{L}_i(\theta_{\mathrm{merged}}),
\end{equation*}
where $\mathcal{L}_i$ is the task-specific loss for the $i$-th objective.

% RecSoup's merging stage is structurally related to PWE MoE~\cite{tang2024moe}, as both map user preferences to task vector coefficients. However, RecSoup differs fundamentally in its basis construction: while PWE MoE operates on dense, high-interference task vectors, RecSoup performs conflict-aware pruning prior to coefficient learning. By decoupling parameter selection from weight allocation, the gating network navigates a purified subspace with reduced noise. This allows the model to learn precise coordinates on a pruned Pareto manifold rather than compensating for parameter conflicts during merging.

% The success of RecSoup lies in its two-stage separation: pruning identifies the important parameter support to minimize inter-task interference, while the gating network calibrates the relative influence based on preferences. This decoupling significantly reduces the optimization burden, enabling a lightweight architecture to achieve superior trade-offs. The synergy between these stages—where pruning conditions the parameter space for more stable navigation—is empirically validated in Section~\ref{sec:synergy}.

The efficacy of RecSoup stems from its two-stage decoupling strategy: pruning first identifies the critical parameter support to minimize inter-task interference, after which the gating network calibrates the relative influence of each task based on user preferences. This separation significantly alleviates the optimization burden, allowing a lightweight architecture to achieve superior trade-offs. The synergy between the two stages is empirically validated in Section~\ref{sec:synergy}, where pruning refines the parameter space and facilitates more stable and effective exploration during adaptive merging.

\section{Discussion}
%{\bf Comparison with parameter generation approaches.}
\subsection{Theoretical Analysis}
This section theoretically compares model merging and parameter generation to show the former's advantage in controllable multi-objective recommendation. 
Let \( \Theta \subseteq \mathbb{R}^n \) be the convex parameter space of the recommendation model (where \( n \) denotes the number of model parameters). Without loss of generality, we consider a two-objective scenario: let \( \theta_1, \theta_2 \in \Theta \) be two single-objective optimal expert models optimized for two conflicting objectives (e.g., accuracy and diversity); let \( P \subseteq \Theta \) be the {\bf Pareto-optimal set} for the controllable multi-objective recommender system, which is defined as the convex hull of \( \theta_1 \) and \( \theta_2 \) (i.e., \( P = \text{conv}(\{\theta_1, \theta_2\}) \), consisting of all convex combinations \( \theta = \lambda \theta_1 + (1-\lambda)\theta_2 \) with \( \lambda \in [0,1] \)), which is a common assumption in multi-objective model merging \cite{rameRewardedSoups2023}; let \( L_0 = \|\theta_1 - \theta_2\|_2 > 0 \) be the {\bf diameter} of \( P \) (Euclidean distance between the two expert models). 
Let \( g: [0,1] \to \Theta \) denote the {\bf parameter generation mapping} (e.g., hypernetwork \cite{chen2023controllable}) in parameter generation approaches, which is designed to map the user preference \( \lambda \in [0,1] \) to model parameters for multi-objective recommendation. Let \( L > 0 \) be the {\bf Lipschitz constant} of \( g \), satisfying \( \|g(\lambda_1) - g(\lambda_2)\|_2 \leq L \cdot |\lambda_1 - \lambda_2| \) for all \( \lambda_1, \lambda_2 \in [0,1] \).
The following theorem clarifies the relationship between the optimal solution space of parameter generation approaches and that of the model merging paradigm in RecSoup. 
Proof is provided in Appendix~\ref{app:proof}.
\begin{theorem}
\label{thm:recmerge:solutionspace}
   If \( L < L_0 \), then the parameter generation mapping \( g \) in parameter generation approaches cannot generate all elements of the Pareto-optimal set \( P = \mathrm{conv}(\{\theta_1, \theta_2\}) \) achievable via model merging in RecSoup.
\end{theorem}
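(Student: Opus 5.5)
We argue by contradiction: assume $g$ generates every element of $P$, i.e. $P \subseteq g([0,1])$, and show this forces $L \ge L_0$.

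The key observation is that both endpoints $\theta_1$ and $\theta_2$ lie in $P$. Taking $\lambda = 1$ and $\lambda = 0$ in the convex combination $\lambda\theta_1 + (1-\lambda)\theta_2$ gives $\theta_1, \theta_2 \in P$. So under the surjectivity assumption there exist preferences $a, b \in [0,1]$ with $g(a) = \theta_1$ and $g(b) = \theta_2$.

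Applying the Lipschitz condition to $a$ and $b$ then gives
\begin{equation*}
L_0 = \|\theta_1 - \theta_2\|_2 = \|g(a) - g(b)\|_2 \le L\,|a - b| \le L,
\end{equation*}
where the last inequality uses $a, b \in [0,1]$, so $|a-b| \le 1$. This contradicts the hypothesis $L < L_0$. Hence at least one of $\theta_1, \theta_2$ is not in the image of $g$, and therefore $P \not\subseteq g([0,1])$. This is exactly the claim. We can add the quantitative refinement that $g([0,1])$ has diameter at most $L < L_0 = \operatorname{diam}(P)$, so $g([0,1])$ cannot even contain a subset of $P$ with diameter exceeding $L$.

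I expect no real obstacle; the argument reduces to the one-line diameter bound above. The only care needed is interpretive. The model-merging side reaches all of $P$ trivially through the coefficient $\lambda \in [0,1]$, with any Lipschitz constant $L_0$ absorbed into the linear map $\lambda \mapsto \lambda\theta_1 + (1-\lambda)\theta_2$. The theorem therefore contrasts this with a preference-to-parameter map $g$ constrained to Lipschitz constant $L < L_0$. It does not depend on whether $g$ is monotone in $\lambda$ or continuous beyond the Lipschitz property.

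I would also note in a remark that the result is sharp. The map $g(\lambda) = \lambda\theta_1 + (1-\lambda)\theta_2$ has Lipschitz constant exactly $L_0$ and covers $P$, so the threshold $L_0$ cannot be improved.
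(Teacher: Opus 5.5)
Your proof is correct, but it takes a different route from the paper's.

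\textbf{The paper's route.} It fixes the midpoint $\theta^* = \tfrac12\theta_1 + \tfrac12\theta_2$ and supposes $g(\lambda^*) = \theta^*$. Summing the Lipschitz bounds from $\lambda^*$ to $1$ and to $0$ gives $\|\theta^* - g(1)\|_2 + \|\theta^* - g(0)\|_2 \le L$. It then imposes, as a ``practical utility'' requirement, that $g(1) = \theta_1$ and $g(0) = \theta_2$, so the left side equals $L_0$.

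\textbf{Your route.} You take preimages $a, b$ of the endpoints from the surjectivity hypothesis being refuted, and apply the Lipschitz bound once.

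\textbf{What yours buys.} It needs no pinning of $g$ at $0$ and $1$. It gives the cleaner statement $\operatorname{diam} g([0,1]) \le L < L_0 = \operatorname{diam} P$, and it comes with a sharpness example. It also shows that the paper's midpoint step does no work. Once $g(1) = \theta_1$ and $g(0) = \theta_2$ are imposed, $L_0 = \|g(1) - g(0)\|_2 \le L$ already contradicts $L < L_0$ without reference to $\theta^*$. So under the paper's assumptions no such $g$ exists, and its conclusion holds vacuously.

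\textbf{What the paper's aims for.} It tries to name a specific missing point, the balanced trade-off $\theta^*$; yours does not. Your argument only shows that $\theta_1$ and $\theta_2$ cannot both be reached. Without the endpoint assumption, the midpoint can in fact lie in the image. For example, $g(\lambda) = \theta^* + \tfrac{L}{L_0}(\lambda - \tfrac12)(\theta_1 - \theta_2)$ has Lipschitz constant exactly $L$, stays in $P$, and satisfies $g(\tfrac12) = \theta^*$.
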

\begin{remark}
    Theorem~\ref{thm:recmerge:solutionspace} illustrates that parameter generation approaches are constrained by the Lipschitz smoothness of g: when \( L < L_0 \), the mapping g is too ``smooth'' to cover the full Pareto-optimal set induced by model merging. It highlights model merging's advantage in controllable multi-objective recommendation: unlike parameter generation, merging avoids such smoothness limitations and can fully navigate the Pareto front, enabling more flexible trade-off control.
\end{remark}

% inference time 时间复杂度，点出我们的方法recsoup
\subsection{Time Complexity Analysis}
\label{sec:complexity analyses}
We analyze the inference-time complexity of parameter generation approaches and RecSoup, considering both adapter-only and full-parameter controllable settings. Using SASRec as a representative backbone, let $d$ denote the transformer hidden dimension, $L$ the number of layers, $H$ the hidden width of the hypernetwork (typically $O(d)$), $|\mathcal{I}|$ the number of items, $T$ the number of denoising steps in diffusion-based generation, $N$ the number of objectives, and $P = L d^2 + |\mathcal{I}| d$ the full-parameter size. 

For parameter generation approaches, task-specific parameters are dynamically generated at inference time. Under adapter-only setting, each adapter consists of two linear projections with $O(d^2)$ parameters. Hypernetwork-based approaches~\cite{chen2023controllable} employ an MLP to generate adapter parameters, resulting in an inference-time complexity $T_{\mathrm{hypernet}} = O(d^2 \cdot H)$, while diffusion-based approaches~\cite{shen2025paragon} iteratively refine parameters over $T$ steps, yielding %$T_{\mathrm{diffusion}} = %
$O(T \cdot d^3)$. When controllability is extended to all model parameters, the generation cost scales with the full parameter size. Accordingly, hypernetwork- and diffusion-based approaches incur %$T_{\mathrm{hypernet}}^{\mathrm{full}} =$
$O\big(P\cdot H\big)$ and 
%T_{\mathrm{diffusion}}^{\mathrm{full}} = 
$O\big(T\cdot P\cdot d\big)$, respectively. 

RecSoup follows a model merging paradigm, where task-specific parameters are pre-trained and merged at inference time. In the adapter-only setting, merging $N$ adapters requires %$T_{\mathrm{RecSoup}} = 
$O\big(N \cdot d^2\big)$, while a lightweight gating network maps an $N$-dimensional preference vector to merging coefficients (Eq.~\eqref{eq:mapping}) with negligible cost $O(N^2)$. In the full-parameter setting, RecSoup directly merges full models, yielding %$T_{\mathrm{RecSoup}}^{\mathrm{full}} = 
$O\big(N \cdot P\big)$. 

Compared to parameter-generation approaches, RecSoup eliminates auxiliary networks and iterative refinement at inference time. Consequently, it avoids the cubic dependence on $d$ introduced by hypernetwork- and diffusion-based approaches, achieving quadric or linear scaling with respect to the parameter size and the number of objectives in different controllable settings. Although full-parameter merging can be dominated by the item embedding size when $|\mathcal{I}| \gg d^2$, RecSoup remains substantially more efficient than diffusion-based generation and offers a favorable trade-off between inference efficiency and controllability.

% 去掉，可以把实验中的分散到related work或者是结论
% \subsection{Difference with Existing Merging Approaches}
% Compared with state-of-the-art merging approaches, RecSoup distinguishes itself in the following aspects:
% (i) \textbf{TIES-merging vs. RecSoup}: While TIES-merging focuses on magnitude-based pruning and sign alignment, RecSoup formulates pruning as a multi-objective optimization. By incorporating absolute cross-objective relevance scores as defined in Eq.~(\ref{eq:align_score}), RecSoup preserves critical features in the shared parameter space that simple magnitude selection might overlook.
% (ii) \textbf{TATR vs. RecSoup}: TATR restricts updates to an orthogonal ``trust region'' to minimize interference. Conversely, RecSoup recognizes that parameters in all directions can conflict in recommendation contexts; thus, it explicitly balances intra-task importance and cross-task utility to selectively retain features rather than merely restricting the update space.
% (iii) \textbf{PCB-Merging vs. RecSoup}: Unlike PCB-Merging which employs fixed factors for merging, RecSoup applies task-relationship logic to generate optimal pruning masks. More importantly, RecSoup utilizes a learnable preference-to-weight mapping defined in Eq.~(\ref{eq:mapping}) instead of fixed coefficients, enabling controllable navigation
% of the Pareto front to adapt to shifting user preferences without retraining.

\section{Experiments}
We conducted experiments across different backbones to evaluate the performance of RecSoup.
\subsection{Experimental Settings}
\subsubsection{Dataset.}We evaluate \text{RecSoup} on three benchmarks: \textbf{MovieLens-1M}\footnote{\url{https://grouplens.org/datasets/movielens/1m/}}, and two categories from the \textbf{Amazon Review} dataset\footnote{\url{https://jmcauley.ucsd.edu/data/amazon/}} (Grocery and Gourmet Food, Toys and Games). We follow the pre-processing procedures adopted in prior work~\cite{shen2025paragon} to ensure consistency in data filtering and sequence organization.

\begin{table*}
  \caption{Overall performance of all approaches. HV = Hypervolume, LF = Length of Pareto Front, Contro. = Controllability. ``w/o Prune'' omits the pruning stage in RecSoup, while ``w/o MLP'' uses an identity mapping in RecSoup. \textbf{Bold} and \underline{underlined} indicate the best and second-best among main competing approaches. $\dagger$ denotes oracle results (Retrain), shown for reference.}
  \label{tab:overall_performance}
  \centering
  \renewcommand{\arraystretch}{0.9}
  \begin{tabular}{>{\arraybackslash}p{1.4cm} >{\arraybackslash}p{1.4cm}
  >{\centering\arraybackslash}p{1.3cm} c >{\centering\arraybackslash}p{1.3cm}
  >{\centering\arraybackslash}p{1.3cm} c >{\centering\arraybackslash}p{1.3cm}
  >{\centering\arraybackslash}p{1.3cm} c >{\centering\arraybackslash}p{1.3cm} }
    \toprule
     \multirow{2}{*}{\textbf{Backbone}} & \multirow{2}{*}{\textbf{Approach}} &
     \multicolumn{3}{c}{MovieLens-1M} &
     \multicolumn{3}{c}{Grocery and Gourmet Food} &
     \multicolumn{3}{c}{Toys and Games} \\
     \cmidrule(lr){3-5} \cmidrule(lr){6-8} \cmidrule(lr){9-11}
     & & HV $\uparrow$ & LF $\uparrow$ & Contro. $\uparrow$
       & HV $\uparrow$ & LF $\uparrow$ & Contro. $\uparrow$
       & HV $\uparrow$ & LF $\uparrow$ & Contro. $\uparrow$ \\
     \midrule
% ---------------- SASRec ----------------
\multirow{9}{*}{\centering SASRec}
& MMR      & 0.1153 & 2  & 0.3333 & 0.5323 & 5  & 1.0000 & \textbf{0.5762} & 7  & 0.8889 \\
& Paragon  & 0.5362 & 11 & 1.0000 & \underline{0.6013} & 10 & 0.9818 & 0.2479 & 8  & 0.9091 \\
& CMR      & 0.5290 & 10 & 0.9818 & 0.1269 & 8  & 0.8545 & 0.3820 & 11 & 1.0000 \\
& Soup     & \underline{0.6536} & 7  & 0.8182 & 0.5404 & 10 & 0.9818 & 0.5467 & 11 & 1.0000 \\
& TATR     & 0.6227 & 6  & 0.7636 & 0.4831 & 9  & 0.9636 & 0.4997 & 11 & 1.0000 \\
& \cellcolor{gray!10}\textbf{RecSoup}
           & \cellcolor{gray!10}\textbf{0.7090} & \cellcolor{gray!10}8  & \cellcolor{gray!10}0.9091
           & \cellcolor{gray!10}\textbf{0.6088} & \cellcolor{gray!10}9  & \cellcolor{gray!10}0.8910
           & \cellcolor{gray!10}\underline{0.5755} & \cellcolor{gray!10}11 & \cellcolor{gray!10}1.0000 \\
& w/o MLP  & 0.6675 & 8  & 0.9091 & 0.5995 & 10 & 0.9818 & 0.5416 & 11 & 1.0000 \\
& w/o Prune& 0.7228 & 11 & 1.0000 & 0.5437 & 11 & 1.0000 & 0.5594 & 11 & 1.0000 \\
\cmidrule(lr){2-11}
& Retrain$^\dagger$
           & 0.6604 & 9  & 1.0000 & 0.7358 & 11 & 1.0000 & 0.6120 & 10 & 0.8485 \\
\midrule
% ---------------- GRU4Rec ----------------
\multirow{9}{*}{\centering GRU4Rec}
& MMR      & 0.1076 & 2  & 0.5333 & \textbf{0.4653} & 3  & 0.6889 & \textbf{0.5428} & 9  & 1.0000 \\
& Paragon  & 0.4143 & 8  & 0.8910 & \underline{0.3699} & 6 & 0.8910 & 0.0521 & 6  & 0.6909 \\
& CMR      & 0.4464 & 11 & 1.0000 & 0.0027 & 1  & 0.0364 & 0.3577 & 11 & 1.0000 \\
& Soup     & \underline{0.5869} & 8 & 0.8910 & 0.2566 & 11 & 1.0000 & 0.4211 & 11 & 1.0000 \\
& TATR     & 0.5716 & 8  & 0.8910 & 0.2995 & 11 & 1.0000 & 0.4131 & 10 & 1.0000 \\
& \cellcolor{gray!10}\textbf{RecSoup}
           & \cellcolor{gray!10}\textbf{0.6497} & \cellcolor{gray!10}11 & \cellcolor{gray!10}1.0000
           & \cellcolor{gray!10}0.3305 & \cellcolor{gray!10}11 & \cellcolor{gray!10}1.0000
           & \cellcolor{gray!10}\underline{0.4315} & \cellcolor{gray!10}11 & \cellcolor{gray!10}1.0000 \\
& w/o MLP  & 0.6051 & 8  & 0.8940 & 0.2734 & 10 & 1.0000 & 0.4211 & 11 & 1.0000 \\
& w/o Prune& 0.5892 & 7  & 0.8727 & 0.2710 & 11 & 1.0000 & 0.4315 & 11 & 1.0000 \\
\cmidrule(lr){2-11}
& Retrain$^\dagger$
           & 0.6569 & 10 & 0.9818 & 0.6999 & 10 & 1.0000 & 0.6073 & 8  & 0.9091 \\
\midrule
% ---------------- TiSASRec ----------------
\multirow{9}{*}{\centering TiSASRec}
& MMR      & 0.1194 & 2  & 0.4222 & \textbf{0.5115} & 5  & 1.0000 & 0.5967 & 8  & 1.0000 \\
& Paragon  & 0.5228 & 10 & 0.9818 & 0.1220 & 11 & 1.0000 & 0.2347 & 8  & 0.9455 \\
& CMR      & 0.5299 & 11 & 1.0000 & 0.1000 & 7  & 0.8909 & 0.4319 & 11 & 1.0000 \\
& Soup     & \underline{0.6110} & 6 & 0.9636 & 0.4628 & 11 & 1.0000 & \underline{0.6110} & 10 & 0.9818 \\
& TATR     & 0.5558 & 8  & 0.8728 & 0.3893 & 11 & 1.0000 & 0.5537 & 10 & 1.0000 \\
& \cellcolor{gray!10}\textbf{RecSoup}
           & \cellcolor{gray!10}\textbf{0.6902} & \cellcolor{gray!10}11 & \cellcolor{gray!10}1.0000
           & \cellcolor{gray!10}\underline{0.4843} & \cellcolor{gray!10}9 & \cellcolor{gray!10}0.9455
           & \cellcolor{gray!10}\textbf{0.6148} & \cellcolor{gray!10}9 & \cellcolor{gray!10}0.9636 \\
& w/o MLP  & 0.6128 & 7  & 1.0000 & 0.4695 & 11 & 1.0000 & 0.6120 & 9  & 0.9636 \\
& w/o Prune& 0.6722 & 11 & 1.0000 & 0.4735 & 9  & 0.9272 & 0.6126 & 9  & 1.0000 \\
\cmidrule(lr){2-11}
& Retrain$^\dagger$
           & 0.6755 & 9  & 0.9818 & 0.6387 & 10 & 0.9818 & 0.6078 & 11 & 1.0000 \\
\bottomrule
\end{tabular}
\end{table*}

\subsubsection{Baselines.}
We compare RecSoup with baselines from three paradigms. \textbf{Retrain} independently optimizes each preference and serves as an oracle.

\textbf{Post-Processing.} \textbf{MMR}~\cite{carbonell1998use} re-ranks candidates to balance accuracy and diversity.

\textbf{Parameter Generation.} \textbf{CMR}~\cite{chen2023controllable} uses a hyper-network to generate preference-conditioned parameters, while \textbf{Paragon}~\cite{shen2025paragon} generates task-specific adapters with a diffusion model.

\textbf{Model Merging.} \textbf{Soup}~\cite{rameRewardedSoups2023} linearly interpolates objective-specific models, while \textbf{TATR}~\cite{sun2025tatr} mitigates inter-objective interference through orthogonal trust regions. We further include \textbf{TIES}~\cite{yadav2023ties}, a conflict-aware pruning baseline that prunes parameters based on objective-specific importance and resolves sign conflicts. In contrast, RecSoup additionally considers cross-objective relevance during pruning; detailed comparisons are presented in Sec.~\ref{rq2}. We also evaluate recent pruning-based methods, including \textbf{DARE}~\cite{yu2024dare}, \textbf{CABS}~\cite{yang2025cabs}, and \textbf{LwPTV}~\cite{zhang2025lwptv}, with additional results reported in Appendix~\ref{app:other_exp}.

\subsubsection{Implemetation Details.}
Following \text{Paragon}~\cite{shen2025paragon}, the Accuracy specialist uses the BPR loss~\cite{rendle2009bpr}, while the Diversity specialist uses a differentiable $\alpha$-NDCG-based surrogate promoting category novelty and reducing redundancy~\cite{shen2025paragon}. They form the \textbf{Retrain} Pareto-front endpoints at $\lambda=0$ and $\lambda=1$. We adopt \textbf{SASRec}~\cite{kang2018sasrec}, \textbf{GRU4Rec}~\cite{hidasi2015gru4rec}, and \textbf{TiSASRec}~\cite{li2020tisasrec}, trained with Adam, learning rate $10^{-3}$, and 64-dimensional embeddings and hidden states. We use 9 negatives for training and 99 for evaluation, set $\beta=0.5$ in Eq.~\eqref{eq:final_score}, and set $\rho\in[0.9,0.95]$. Since $\rho$ controls the pruning criterion rather than the exact retention fraction, the resulting keeping ratio is typically 0.4--0.7. Further MMR settings and coefficient-learner architectures are given in Appendix~\ref{app:implement}.
\subsubsection{Metrics.} 
We evaluate controllable multi-objective recommendation using Hypervolume, Pareto front length, and controllability. \textbf{Hypervolume (HV)}~\cite{guerreiro2021hypervolume} measures the volume of the objective space dominated by the Pareto-optimal solutions with respect to a reference point. A larger HV indicates better overall trade-offs across objectives. 
\textbf{Length of Pareto Front (LF)}~\cite{Riquelme2015LF} denotes the number of distinct Pareto-optimal solutions, reflecting the range of achievable trade-offs. 
\textbf{Controllability}~\cite{xie2025bonesoup} measures how well the model aligns with user preferences. The controllability score $C$ is defined as:  
\begin{equation*}
C = \frac{1}{M(M-1)} \sum_{i \neq j} 
\mathbb{I} \Bigg[
\bigwedge_{k=1}^{N}
\operatorname{sgn}(\mu_{i,k} - \mu_{j,k})
=
\operatorname{sgn}\big(m_k(\mathcal{S}_i) - m_k(\mathcal{S}_j)\big)
\Bigg], 
\end{equation*}
where $M$ is the number of preference vectors, $N$ the number of objectives, 
$\boldsymbol{\mu}_i$ the $i$-th preference vector, $\mathcal{S}_i$ the corresponding model output, 
$m_k(\mathcal{S}_i)$ the $k$-th metric value, and $\mathbf{1}[\cdot]$ the indicator function; 
higher $C$ indicates stronger alignment with preferences.

% For an overall evaluation of the Pareto front, we calculate the \textbf{Hypervolume (HV)}~\cite{guerreiro2021hypervolume}, which measures the area covered by the Pareto optimal points relative to a reference point. A larger HV indicates a superior overall trade-off. Additionally, we report the \textbf{Length of Pareto Front}, defined as number of distinct Pareto-optimal solutions identified along the front, to quantify the range of achievable trade-offs.
% Following~\cite{shen2024survey}, we evaluate \textbf{Controllability} to measure how well the model aligns with user-specified preferences.
\begin{figure*}[t]
    \centering
    \subfigure[Parameter decoupling.]{
      \includegraphics[width=0.24\linewidth]{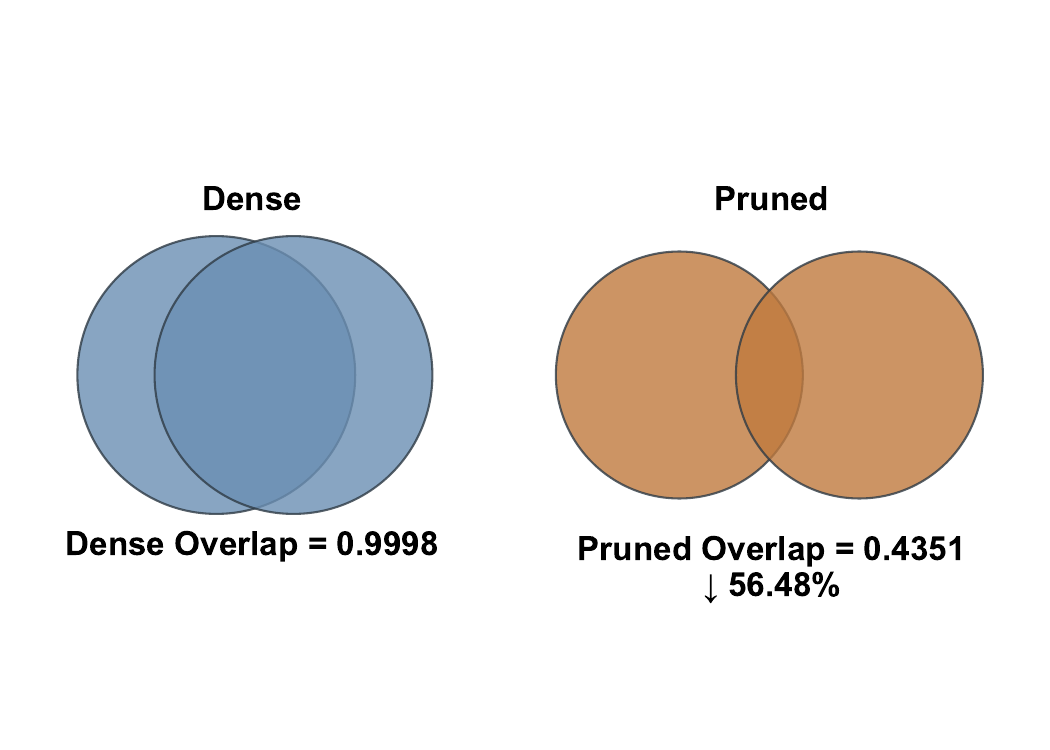}
    }\hfill
    \subfigure[Geometry preservation.]{
        \includegraphics[width=0.24\linewidth]{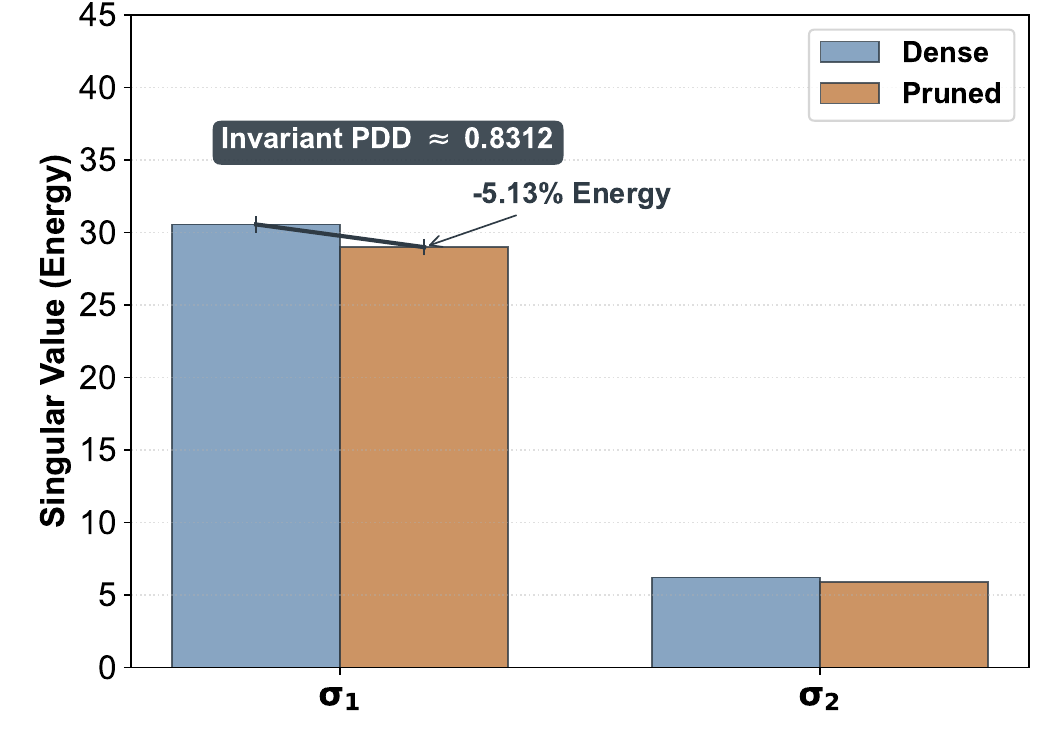}
    }\hfill
    \subfigure[Feature sharpening effect.]{
        \includegraphics[width=0.24\linewidth]{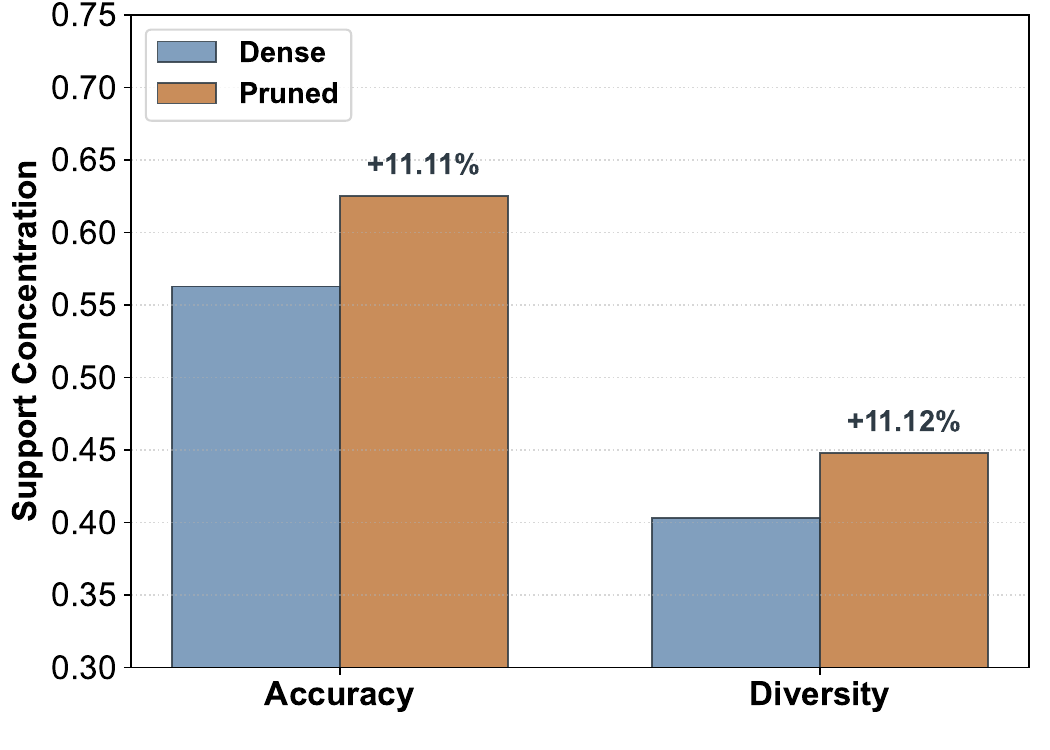}
    }\hfill
    \subfigure[Pareto front.]{
        \includegraphics[width=0.24\linewidth]{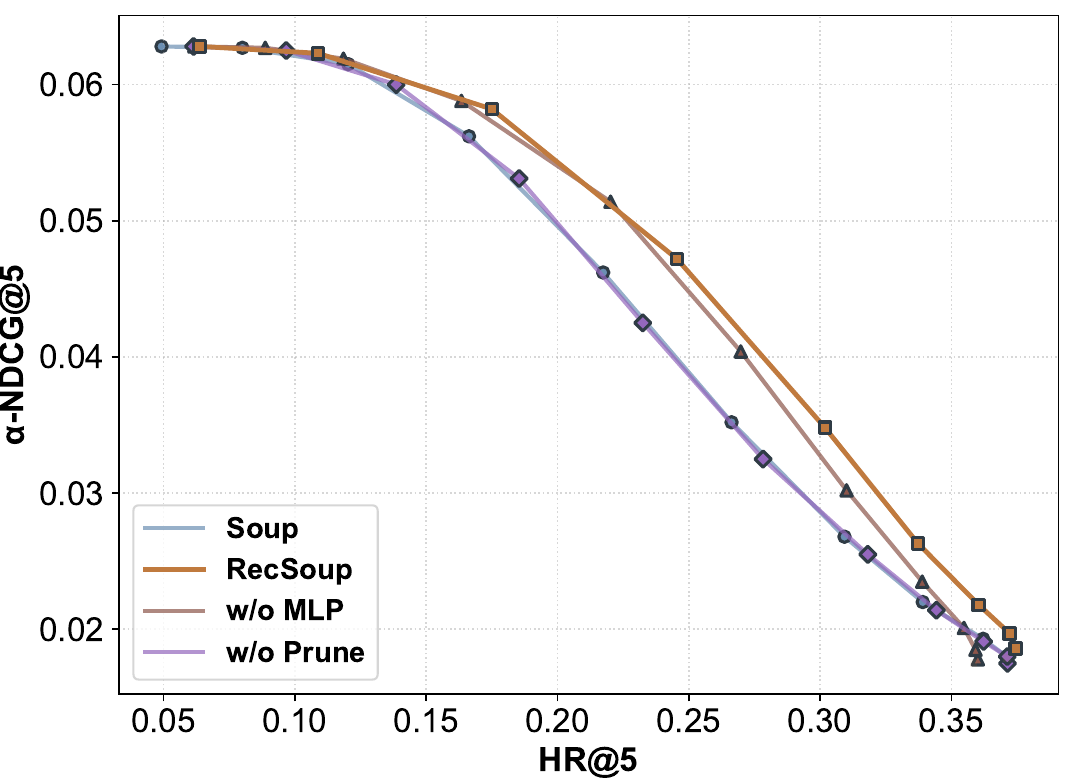}
    }

     \caption{Structural analysis of task vectors under RecSoup pruning.
    (a) Pruning reduces the overlap between objectives, indicating effective parameter decoupling.
    (b) The singular spectrum and main directions remain nearly unchanged, suggesting that the overall structure is preserved.
    (c) Support concentration increases after pruning, showing that noisy updates are removed and task-specific signals become clearer.
    (d) The resulting Pareto front is expanded.}
    \label{fig:structure}
\end{figure*}

\subsection{Experiment Results}
\label{sec:experiment-result}
We conduct experiments to evaluate RecSoup from two aspects: 
(i) its overall performance and generalization compared with baselines across different backbones and datasets;
(ii) the empirical contribution of each individual component in RecSoup.
The results are presented in Table~\ref{tab:overall_performance}.

Overall, across three backbones and datasets, RecSoup achieves a favorable balance between Pareto efficiency and controllability. Compared with strong model-merging baselines such as Soup and TATR, RecSoup consistently attains higher Hypervolume. On MovieLens-1M, where Soup and TATR already outperform earlier controllable approaches, RecSoup further improves Hypervolume by up to 8.5\% on SASRec and 7.3\% on TiSASRec, while maintaining controllability close to the upper bound of 1.0. These results suggest that linear task-vector merging alone is insufficient to fully resolve complex interactions among conflicting objectives, whereas RecSoup more effectively explores the Pareto frontier without sacrificing controllability. 
MMR exhibits a different trade-off behavior: although it achieves competitive Hypervolume under certain settings, this often comes with much shorter Pareto fronts, indicating that its gains are concentrated on a limited number of operating points. In contrast, RecSoup consistently produces longer Pareto fronts, offering a richer set of controllable solutions, enabling flexible selection for practical use.
Parameter generation approaches exhibit unstable behavior: despite high controllability, their Hypervolume degrades sharply under sparse or noisy feedback, particularly on Amazon. In these settings, RecSoup outperforms them with around 20--50\% higher Hypervolume in most cases while preserving controllability above 0.89. Notably, RecSoup occasionally surpasses the oracle, likely due to its ability to explore feasible Pareto solutions beyond the optimization paths induced by retraining. %Moreover, RecSoup attains the highest or second-highest Hypervolume even on lower-capacity backbones such as GRU4Rec, indicating strong generalization.

The ablation results show that adaptive merging is necessary for stable controllable recommendation. Removing coefficient learner consistently reduces Hypervolume, which indicates that directly using the preference vector as merging coefficients cannot sufficiently capture their relationship. The role of pruning is more conditional. In several settings, w/o Prune performs comparably to or even better than RecSoup, suggesting that pruning may be unnecessary when task vectors are already compatible or when sparsification removes useful objective-specific updates. However, pruning becomes important when task vectors contain stronger interference or noisy updates, especially on more challenging Amazon settings. By selecting parameters according to both objective-specific importance and cross-objective relevance, pruning provides a cleaner basis for adaptive merging. Thus, pruning mainly improves robustness under conflicting objectives. 
%Overall, these results indicate that pruning and adaptive merging are both individually necessary for achieving stable and high-quality controllable trade-offs, and removing either component leads to noticeable performance degradation under certain settings.
\begin{figure}[ht]
    \centering
    \subfigure[Accuracy Preference Alignment.]{
        \includegraphics[width=0.45\linewidth]{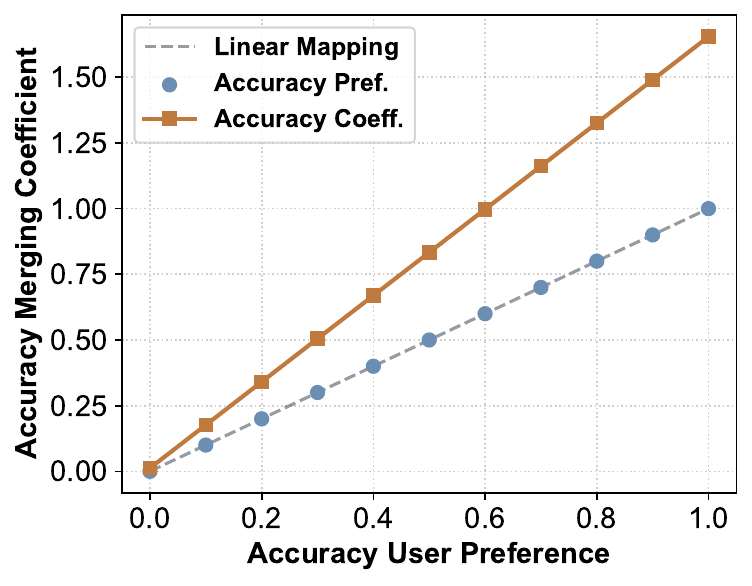}}
    \hfill
    \subfigure[Diversity Preference Alignment.]{
        \includegraphics[width=0.45\linewidth]{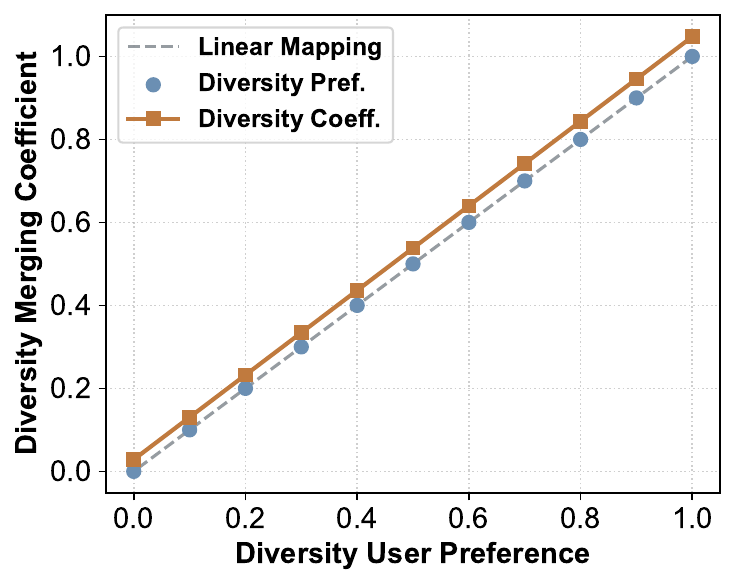}}
    \caption{Alignment between user preferences and learned merging coefficients.}
    \label{fig:preference_alignment}
\end{figure}

\subsection{Analyses}
\subsubsection{\textbf{RQ1: How do Two-stage Design Facilitate Controllable Multi-Objective Recommendation?}}
\label{sec:synergy}
Figure~\ref{fig:structure} explains why pruning synergizes with adaptive merging.
Figure~\ref{fig:structure}(a) reports the support overlap ratio between pairs of task vectors, computed as the intersection-over-union of parameter supports whose absolute update magnitude exceeds a small threshold.
Pruning clearly reduces this overlap, indicating effective parameter decoupling across objectives. By reducing shared and potentially conflicting parameters, pruning lowers cross-objective interference and provides a cleaner basis for multi-objective merging. This decoupling does not largely change the overall structure of task vectors.
As shown in Figure~\ref{fig:structure}(b), the singular spectrum and main directions remain nearly unchanged after pruning, suggesting that pruning preserves the main structure of parameter updates.
Figure~\ref{fig:structure}(c) further shows that support concentration increases after pruning, meaning that objective-specific updates become more focused on a smaller set of important parameters.
Together, these changes make the task-vector space cleaner and more controllable, which leads to an expanded Pareto front in Figure~\ref{fig:structure}(d). Based on this pruned parameter space, Figure~\ref{fig:preference_alignment} further analyzes the learned merging coefficients.
A simple linear rule assumes that the input preference vector can be directly used as the merging coefficients. This assumption can be violated when objectives have different parameter scales, non-linear interactions, or residual conflicts. RecSoup addresses this linear regime violation by decoupling parameter selection from coefficient learning: pruning first builds a cleaner task-vector basis, and adaptive merging then learns non-linear coefficients conditioned on preferences. Thus, RecSoup can amplify or suppress each objective beyond the input preference value, enabling more flexible navigation along the Pareto front.

\begin{figure}[t]
  \centering
 \includegraphics[width=\linewidth]{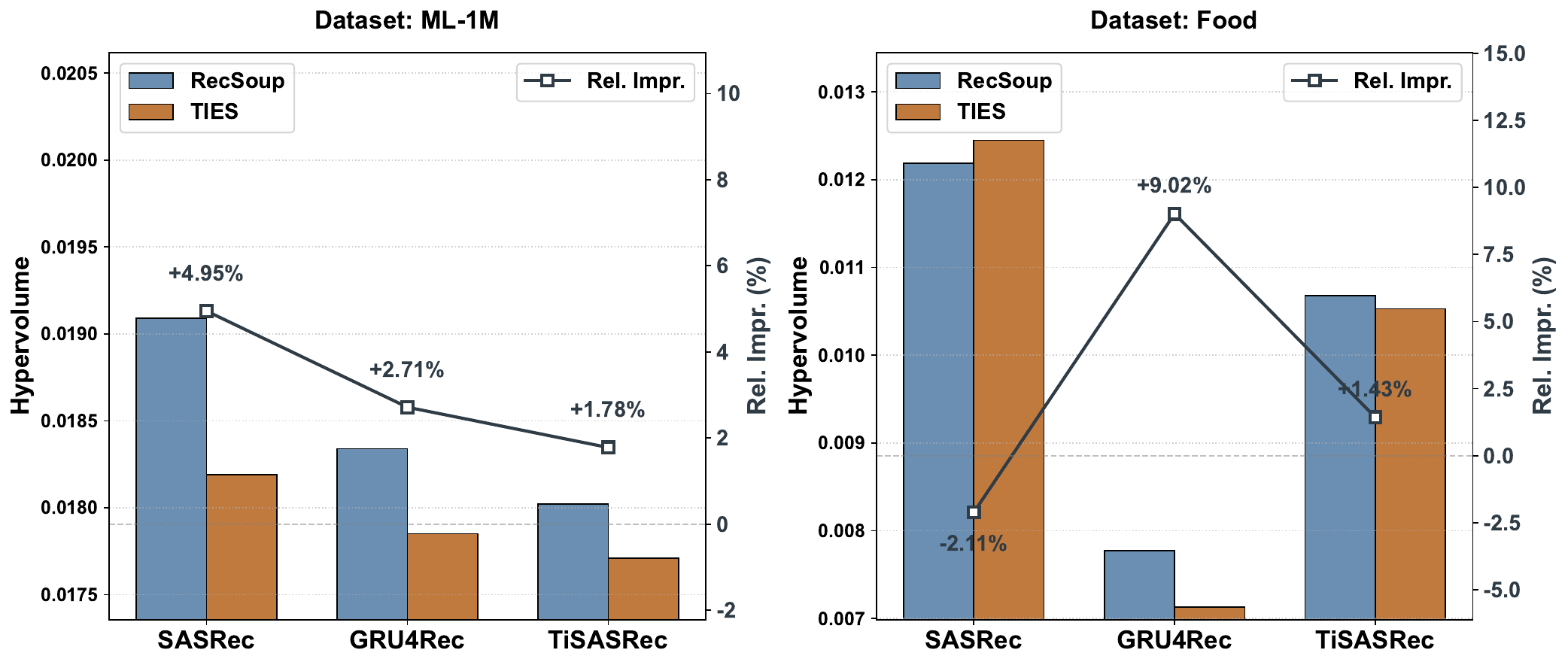}
  \caption{Hypervolume (HV) comparison between RecSoup and TIES. Bars show absolute HV, and the line indicates the relative improvement of RecSoup over TIES.}
  \Description{A dual-axis chart showing that RecSoup consistently achieves higher Hypervolume scores compared to TIES across SASRec, GRU4Rec, and TiSASRec. The relative improvement reaches up to 9.02\% on the Food dataset, demonstrating the effectiveness of the proposed weight soup strategy.}
  \label{fig:ties-vs-sim}
\end{figure}
\subsubsection{\textbf{RQ2: Impact of Cross-Objective Relevance Score}}
\label{rq2}
We investigate the contribution of the Cross-Objective Relevance Score $\mathbf{T}_{\mathrm{align}}$ defined in Eq.~\eqref{eq:align_score} to the pruning effectiveness of RecSoup. The pruning strategy in RecSoup is guided by two complementary signals: $\mathbf{T}_{\mathrm{self}}$  
%which measures the importance of individual objectives, 
and $\mathbf{T}_{\mathrm{align}}$. 
%which captures the cross-objective relevance. 
In contrast, TIES performs pruning solely based on $\mathbf{T}_{\mathrm{self}}$. To isolate the impact of $\mathbf{T}_{\mathrm{align}}$, we ensure that both RecSoup and TIES share identical merging coefficients and merging procedures. Figure~\ref{fig:ties-vs-sim} presents the Hypervolume (HV) results and the relative improvement of RecSoup over the TIES baseline across three backbones and multiple datasets. On MovieLens-1M, RecSoup consistently outperforms TIES, achieving relative HV gains of +4.95\% (SASRec), +2.71\% (GRU4Rec), and +1.78\% (TiSASRec). These results confirm that $\mathbf{T}_{\mathrm{align}}$ effectively guides the pruning process: by considering cross-objective relevance in addition to objective-specific importance, RecSoup selectively retains parameters critical to multiple objectives. This prevents the over-pruning of parameters that are essential for one objective but not heavily weighted by $\mathbf{T}_{\mathrm{self}}$, which TIES may discard. In summary, this ablation demonstrates that the similarity score is a key component in RecSoup's pruning stage, directly contributing to improved multi-objective performance by preserving cross-task compatibility while maintaining a controlled merging process. 
%The observed improvements across multiple datasets and backbone architectures highlight its robustness.

\subsubsection{\textbf{RQ3: Inference-Time Efficiency of Parameter Generation and RecSoup}}We compare hypernetwork-based generation approach~\cite{chen2023controllable} with RecSoup in terms of their wall-clock latency following the time complexity analysis in Section~\ref{sec:complexity analyses}.
Table~\ref{tab:timing} reports the average parameter construction time over 11 user preferences across different datasets using the same backbone. The results show that RecSoup consistently incurs lower latency than hypernetwork-based generation. This efficiency advantage is particularly pronounced when multiple preference configurations are evaluated, as merging reuses a fixed set of task vectors and avoids repeated parameter synthesis. In contrast, diffusion-based parameter generation is substantially slower due to its iterative nature. In our experiments, the diffusion process was configured with 10,000 denoising steps; and under this setting, the average parameter construction time per preference was approximately 87.79 seconds, orders of magnitude larger than both hypernetwork generation and RecSoup. These observations validate our theoretical analysis and demonstrate that RecSoup provides a practical and efficient solution for inference-time controllable recommendation.

\begin{table}[t]
\centering
\caption{Inference-time parameter construction latency (milliseconds) for different controllable approaches.
All results are averaged over 10 runs.}
\label{tab:timing}
\begin{tabular}{lccc}
\toprule
Dataset & Hypernetwork & RecSoup & Speedup \\
\midrule
ML-1M & 7.5281 $\pm$ 0.5990 & 0.0454 $\pm$ 0.0100 & 165.9× \\
Amazon(Food) & 9.3799 $\pm$ 0.1756 & 0.0410 $\pm$ 0.0011& 228.8× \\
Amazon(Games) & 9.2741 $\pm$ 0.2746 & 0.0436 $\pm$ 0.0055 & 212.7× \\
\bottomrule
\end{tabular}
\end{table}

\begin{table}[t]
\centering
\caption{Accuracy-Fairness controllable evaluation results on ML-1M with SASRec.}
\label{tab:rq4_acc_fair}
\renewcommand{\arraystretch}{0.92}
\setlength{\tabcolsep}{2.5pt}
\begin{tabular}{lccc|lccc}
\toprule
\multicolumn{4}{c|}{Baselines} & \multicolumn{4}{c}{Our Method} \\
\cmidrule(lr){1-4} \cmidrule(lr){5-8}
Method & HV$\uparrow$ & Contro.$\uparrow$ & LF$\uparrow$
& Method & HV$\uparrow$ & Contro.$\uparrow$ & LF$\uparrow$ \\
\midrule
Soup & 0.3968 & 0.6909 & 6
& \cellcolor{gray!10}\textbf{RecSoup} & \cellcolor{gray!10}\textbf{0.4270} & \cellcolor{gray!10}0.6727 & \cellcolor{gray!10}5 \\

TATR & 0.2064 & 0.7818 & 6
& w/o MLP & \underline{0.4176} & 0.7091 & 5 \\

TIES & 0.3858 & 0.7818 & 6
& w/o Prune & 0.3872 & 0.7818 & 6 \\

CMR & 0.2643 & \textbf{0.8364} & 4
&  &  &  &  \\
\bottomrule
\end{tabular}
\end{table}

\subsubsection{\textbf{RQ4: How does RecSoup scale to additional objectives such as fairness?}}
We first extend controllable recommendation to the Accuracy-Fairness setting by constructing Accuracy-oriented and Fairness-oriented specialist models. Fairness is measured using the absolute NDCG gap between male and female user groups.

Table~\ref{tab:rq4_acc_fair} reports the controllable results on MovieLens-1M with SASRec. RecSoup achieves the highest HV among all compared controllable baselines, showing stronger Pareto efficiency in the Accuracy-Fairness setting. The performance drop of ``w/o MLP'' and ``w/o Prune'' further verifies the contribution of the coefficient learner and the pruning strategy. We exclude Paragon from the main comparison because it collapses to a single Pareto solution in this setting, which cannot provide meaningful controllable transitions across different preferences. Overall, these results show that RecSoup can be effectively generalized to fairness-aware controllable recommendation.

We further extend controllable recommendation to jointly optimize Accuracy, Diversity, and Fairness. As shown in Table~\ref{tab:rq4_three_objective_main}, RecSoup achieves the best Hypervolume and overall controllability among all baselines, while also obtaining the strongest controllability on the Accuracy--Diversity objective pair. In contrast, CABS exhibits a substantial degradation in Hypervolume despite relatively strong controllability on the Accuracy--Fairness pair, suggesting that strong local controllability does not necessarily translate into globally balanced Pareto frontiers under multi-objective settings. Although Soup produces slightly more Pareto solutions, its lower HV indicates inferior frontier quality compared with RecSoup. Overall, these results demonstrate the scalability of RecSoup in multi-objective recommendation settings.
\begin{table}[t]
\centering
\caption{Three-objective controllable recommendation results on ML-1M with SASRec. Control. represents the averaged controllability score across objective pairs.}
\label{tab:rq4_three_objective_main}
\begin{tabular*}{\columnwidth}{@{\extracolsep{\fill}}lccccc}
\toprule
Method 
& HV$\uparrow$ & Control.$\uparrow$ & Acc--Div$\uparrow$ & Acc--Fair$\uparrow$ & LF$\uparrow$ \\
\midrule

SOUP    & \underline{0.6015} & 0.6479 & \underline{0.9152} & \underline{0.8816} & \textbf{49} \\

TATR    & 0.5010 & 0.6211 & 0.8732 & 0.8135 & 31 \\

DARE    & 0.5881 & \underline{0.6482} & 0.9147 & 0.8639 & 45 \\

CABS    & 0.2594 & 0.6395 & 0.4172 & \textbf{0.9207} & 42 \\

LwPTV   & 0.5991 & 0.6452 & 0.9119 & \underline{0.8816} & 42 \\

\midrule

RecSoup & \textbf{0.6148} & \textbf{0.6513} & \textbf{0.9245} & 0.8643 & \underline{46} \\

\bottomrule
\end{tabular*}
\end{table}

\section{Conclusions}
This work shows that task-vector merging, despite its success in large language models, cannot be directly applied to multi-objective sequential recommendation due to the absence of a shared linear regime. Motivated by this observation, we propose \textbf{RecSoup}, which mitigates parameter interference through pruning and enables controllable trade-offs via a lightweight gating network. Empirical results further indicate that model-merging approaches offer a more efficient and effective paradigm than parameter-generation approaches for controllable multi-objective recommendation.

\begin{acks}
This work was partially supported by the National Natural Science Foundation of China (No. 62376275, 62472426). Work partially done at Beijing Key Laboratory of Research on Large Models and Intelligent Governance, and Engineering Research Center of Next-Generation Intelligent Search and Recommendation, MOE. Supported by fund for building world-class universities (disciplines) of Renmin University of China.
\end{acks}

\appendix
\section{Proof of Theorem~\ref{thm:recmerge:solutionspace}}
\label{app:proof}
\begin{proof}[Proof of Theorem~\ref{thm:recmerge:solutionspace}]
We prove the conclusion by showing that there exists at least one element in \( P \) that is not in the image of \( g \).

First, consider the midpoint of \( P \):
$
\theta^* = \frac{1}{2}\theta_1 + \frac{1}{2}\theta_2 \in P. 
$
We aim to show \( \theta^* \notin g([0,1]) \) via contradiction. Assume for contradiction that \( \theta^* \in g([0,1]) \), so there exists \( \lambda^* \in [0,1] \) such that \( g(\lambda^*) = \theta^* \).

By the Lipschitz continuity of \( g \), we have:
\[
\|\theta^* - g(1)\|_2 \leq L \cdot |\lambda^* - 1| \quad \text{and} \quad \|\theta^* - g(0)\|_2 \leq L \cdot |\lambda^*|.
\]
Summing these two inequalities and using \( |\lambda^* - 1| + |\lambda^*| = 1 \) (since \( \lambda^* \in [0,1] \)) gives
$
\|\theta^* - g(1)\|_2 + \|\theta^* - g(0)\|_2 \leq L.
$

Next, analyze the practical utility of \( g(1) \) and \( g(0) \). For \( \lambda = 1 \), \( g(1) \) must be optimized for the same objective as \( \theta_1 \) (to be practically useful for multi-objective trade-off), so \( g(1) = \theta_1 \) (otherwise, it cannot serve as the extreme point of the objective trade-off). Similarly, \( g(0) = \theta_2 \) for \( \lambda = 0 \).

By the triangle inequality, 
$$
L_0 = \|\theta_1 - \theta_2\|_2 \leq \|\theta_1 - \theta^*\|_2 + \|\theta^* - \theta_2\|_2.
$$ 
Since \( g(1) = \theta_1 \) and \( g(0) = \theta_2 \), this simplifies to:
\[
\|\theta^* - g(1)\|_2 + \|\theta^* - g(0)\|_2 = \|\theta_1 - \theta^*\|_2 + \|\theta^* - \theta_2\|_2 = L_0.
\] 
Combining the two results above yields \( L_0 \leq L \), which contradicts the given condition \( L < L_0 \). Thus, our initial assumption is invalid, meaning \( \theta^* \notin g([0,1]) \). 
Since \( \theta^* \in P \) but \( \theta^* \notin g([0,1]) \), the parameter generation mapping \( g \) cannot generate all elements of the Pareto-optimal set \( P \).
\end{proof}

\section{Additional Implementation Details}
\label{app:implement}
\paragraph{MMR Evaluation Settings.}
For MMR evaluation, the recommendation model first generates a candidate pool containing the top-100 items for each user, and MMR reranking is applied to produce the final top-100 recommendation list. The relevance term is aligned with the accuracy objective, while the diversity term is computed using Jaccard dissimilarity over category-level one-hot item encodings. We evaluate 11 controllable preference vectors satisfying $\mu_{acc}+\mu_{div}=1$, where $\mu_{acc}\in\{0.0,0.1,\dots,1.0\}$.

\paragraph{Coefficient Learner Architecture.}
The coefficient learner is implemented as a lightweight routing module conditioned on the preference vector. In all experiments, we use a single-layer linear router. The routing weights are initialized using a Gaussian distribution with standard deviation 0.01. During training, the backbone model and expert task vectors are frozen, and only the routing parameters are updated. The coefficient learner is optimized using Adam with learning rate 0.001, batch size 512.
\begin{table}[t]
\centering
\caption{Overall comparison with additional pruning-based merging baselines. All metrics are averaged over Accuracy-Diversity configurations.}
\label{tab:additional_baselines_avg}
\begin{tabular}{lcccc}
\toprule
Method & Avg.HV$\uparrow$ & Avg.Contro.$\uparrow$ & Avg.LF$\uparrow$
& Avg.Rank$\downarrow$ \\
\midrule
DARE  & 0.4977 & 0.9414 & 9.56 & 3.67 \\
CABS  & 0.5044 & 0.8951 & 9.44 & 2.56 \\
LwPTV & 0.5200 & 0.9111 & 9.56 & 2.22 \\
\rowcolor{gray!10}RecSoup & \textbf{0.5662} & \textbf{0.9657} & \textbf{9.89} & \textbf{1.56} \\
\bottomrule
\end{tabular}
\end{table}

\section{Additional Experimental Results}
\label{app:other_exp}
Table~\ref{tab:additional_baselines_avg} summarizes the overall comparison with additional pruning-based merging baselines across all backbone-dataset configurations under the Accuracy-Diversity setting. \textbf{DARE}~\cite{yu2024dare} randomly drops task-vector parameters followed by rescaling. \textbf{CABS}~\cite{yang2025cabs} performs sequential sparse pruning to reduce overlap among task vectors, and \textbf{LwPTV}~\cite{zhang2025lwptv} prunes layer-wise salient parameters with low inter-task deviation. RecSoup achieves the highest average hypervolume, controllability, and Pareto-front coverage, while also obtaining the best average ranking among all methods. In particular, RecSoup consistently outperforms DARE in all settings and demonstrates stronger overall Pareto efficiency than CABS and LwPTV, indicating that incorporating cross-objective relevance into the pruning process leads to more robust controllable model merging.

\section*{GenAI Usage Disclosure}
The authors used generative AI tools only for language polishing and improving clarity of expression. All AI-assisted edits were reviewed and verified by the authors, who take full responsibility for the final content.
%%
%% The next two lines define the bibliography style to be used, and
%% the bibliography file.
\bibliographystyle{ACM-Reference-Format}
\bibliography{reference}

\end{document}